\newcommand{\beq}{\begin{eqnarray}}
\newcommand{\eeq}{\end{eqnarray}}
\newcommand{\beqn}{\begin{eqnarray}}
\newcommand{\eeqn}{\end{eqnarray}}
\newcommand{\bea}{\begin{eqnarray}}
\newcommand{\eea}{\end{eqnarray}}
\newcommand{\be}{\begin{equation}}
\newcommand{\ee}{\end{equation}}
\newcommand{\uiuc}[1]{
	\centerline{
		\begin{minipage}[c]{0.7\textwidth}
			\begin{center}
			${}^{#1}$ Illinois Center for Advanced Studies of the Universe \& Department of Physics,\\ 
			University of Illinois, 1110 West Green St., Urbana IL 61801, U.S.A.
			\end{center}
		\end{minipage}
		}
	}
   \newcommand{\nyu}[1]{
	\centerline{
		\begin{minipage}[c]{0.7\textwidth}
			\begin{center}
			${}^{#1}$ Center for Cosmology and Particle Physics, New York University, New York, NY 10003, USA
			\end{center}
		\end{minipage}
		}
	}
\renewcommand\mathbb[1]{\mathbbm{#1}}
\DeclareRobustCommand{\loplus}{\mathbin{\mathpalette\dog@lsemi{+}}}
\DeclareRobustCommand{\lotimes}{\mathbin{\mathpalette\dog@lsemi{\times}}}
\DeclareRobustCommand{\roplus}{\mathbin{\mathpalette\dog@rsemi{+}}}
\DeclareRobustCommand{\rotimes}{\mathbin{\mathpalette\dog@rsemi{\times}}}
\newcommand{\dog@rsemi}[2]{\dog@semi{#1}{#2}{-90,90}}
\newcommand{\dog@lsemi}[2]{\dog@semi{#1}{#2}{270,90}}
\newcommand{\dog@semi}[3]{%
  \begingroup
  \sbox\z@{$\m@th#1#2$}%
  \setlength{\unitlength}{\dimexpr\ht\z@+\dp\z@\relax}%
  \makebox[\wd\z@]{\raisebox{-\dp\z@}{%
    \begin{picture}(1,1)
    \linethickness{\variable@rule{#1}}
    \roundcap
    \put(0.5,0.5){\makebox(0,0){\raisebox{\dp\z@}{$\m@th#1#2$}}}
    \put(0.5,0.5){\arc[#3]{0.5}}
    \end{picture}%
  }}%
  \endgroup
}
\newcommand{\variable@rule}[1]{%
  \fontdimen8  
  \ifx#1\displaystyle\textfont3\else
    \ifx#1\textstyle\textfont3\else
      \ifx#1\scriptstyle\scriptfont3\else
        \scriptscriptfont3\relax
  \fi\fi\fi
}
\DeclareRobustCommand{\loplus}{\mathbin{\mathpalette\dog@lsemi{+}}}
\newcommand{\Hom}[2]{\text{Hom}(#1,#2)}
\newtheorem{theorem}{Theorem}[section]
\newtheorem{proof}{Proof}[section]
\newtheorem{definition}{Definition}[section]
\newlist{encase}{enumerate}{1}
\setlist[encase]{label=Case \arabic*:, leftmargin=2cm}
\newcommand{\thistitle}{Extensions from within} 
\begin{document}

\title{\thistitle}
\author{
Shadi Ali Ahmad$^{a}$ and Marc S. Klinger$^{b}$
	\\ 
	\\
    {\small \emph{\nyu{a}}} \\ \\
	{\small \emph{\uiuc{b}}}
	\\
	}
\date{}
\maketitle
\vspace{-0.5cm}
\begin{abstract}
Inclusions and extensions lie at the heart of physics and mathematics. The most relevant kind of inclusion in quantum systems is that of a von Neumann subalgebra, which is the focus of this work. We propose an object intrinsic to a given algebra that indexes its potential extensions into larger algebras. We refer to this object as a spatial Q-system, since it is inspired by the Q-system construction of Longo and others that has been used to categorify finite index inclusions for properly infinite algebras. The spatial Q-system broadens the usefulness of the Q-system by extending its applicability to the context of possibly infinite index inclusions admitting only operator valued weights rather than conditional expectations. An immediate physical motivation for the spatial Q-system arises in the form of the crossed product, which is a method for extending a given algebra by the generators of a locally compact group which acts upon it. If the group in question is not finite, the inclusion of the original algebra into its crossed product will be of infinite index and admit only an operator valued weight. Such an extension cannot be described by a standard Q-system. The spatial Q-system covers this case and, in fact, may be regarded as a far reaching generalization of the crossed product in which an algebra is extended by a collection of operators that possess a closed product structure compatible with the original algebra, but need not generate a symmetry object. We comment upon the categorical interpretation of the spatial Q-system, and describe how it may be interpreted as instantiating a protocol for approximate, or non-isometric, quantum error correction or in terms of quantum reference frames. We conclude by identifying a series of physical applications including the study of generalized symmetries, non-perturbative quantum gravity, black hole evaporation and interiors, and holography.
\vspace{0.3cm}
\end{abstract}
\begingroup
\hypersetup{linkcolor=black}
\tableofcontents
\endgroup
\section{Introduction} 

Two fundamental notions that emerge in various guises throughout physics and mathematics are inclusions and extensions. A standard notion of inclusion that appears in physics is that of a subsystem. For example, a quantum field theory may be formalized in terms of a net of operator algebras, which is an assignment of observables to given subregions of spacetime. When the theory in question admits a gauge symmetry it is necessary to further restrict observable degrees of freedom to satisfy symmetry constraints. In this way, the act of gauging a theory may be regarded as giving rise to an inclusion of the gauge invariant operators into the full kinematical algebra. We can turn this question on its head. Given a net of gauge invariant algebras, what kinematical algebras exist which admit the former as a neutral sector? From this point of view, we regard the kinematical system as an extension of its invariant subalgebra. This question was answered by Doplicher and Roberts in seminal work \cite{Doplicher:1972kr,Doplicher1989}, where they demonstrated that one can completely recover the structure of the symmetry and its action on the kinematical net provided it is encoded in a compact group.

A construction which is closely related to Doplicher and Roberts' result is the crossed product of a von Neumann algebra by a locally compact group \cite{10.1007/BF02392041}. In essence, the crossed product is an approach to \emph{extending} a von Neumann algebra by the generators of a group which acts automorphically on it. The crossed product has received a large amount of attention in recent work which has explored its usefulness in a wide range of physical applications. The modular crossed product, for example, has appeared as a natural device for encoding subregion algebras in perturbative quantum gravity \cite{Witten:2021unn,Chandrasekaran:2022cip,Chandrasekaran:2022eqq,Jensen:2023yxy,AliAhmad:2023etg,Klinger:2023tgi,Kudler-Flam:2023hkl,Kudler-Flam:2023qfl,AliAhmad:2024eun}. Remarkably, extending subregion algebras from quantum field theory by incorporating generators of the modular automorphism smooths divergences that would otherwise appear in the computation of entropic observables like the entanglement entropy. In this regard, the modular crossed product assumes the role of a regulator of QFT subregion physics. At the same time, the crossed product has also emerged as a tool for constructing invariant algebras in gauge theory \cite{Chandrasekaran:2022cip,Jensen:2023yxy,Klinger:2023auu}. In this capacity, the extended degrees of freedom are used to dress operators in such a way as to satisfy grouplike symmetry constraints. Finally, the crossed product can also be understood as an algebraic formalization of quantum reference frames \cite{AliAhmad:2024wja,AliAhmad:2024vdw,Fewster:2024pur,DeVuyst:2024pop}, with the extended degrees of freedom playing the role of probes which define a measurement protocol relative to a chosen observer. 

Despite its usefulness, the crossed product is not without its limitations. In its conventional form, the crossed product only allows for the extension of an algebra by the generators of a locally compact group. Generalizations of the crossed product also exist for quantum groups \cite{enock1977produit,enock1980produit} and locally compact groupoids \cite{renault2006groupoid,williams2007crossed,muhly2008renaultsequivalencetheoremgroupoid}. However, one might be interested in understanding extensions which do not rely upon the existence of a grouplike structure underlying them. Given a von Neumann algebra $M$ in which another algebra $N$ is embedded as a subalgebra $N \subset M$, we unlock a dual perspective. We may regard (1) $N$ as an \textit{inclusion} in $M$ and/or (2) $M$ as an \textit{extension} of $N$. In this work, we present an approach to algebraic extensions which substantially generalizes the crossed product while retaining many of its desirable features. The question is no longer for which algebra $M$ is $N$ the neutral part, but generally where does $N$ fit compatibly.  

Our construction builds upon seminal work by Longo and others introducing and expounding upon a structure called a Q-system~\cite{cmp/1104254494,Longo:1989tt,Bischoff:2014xea}. We give a lengthy introduction to the Q-system in the main body of the note, so for now let us simply say that it is a method for constructing inclusions of operator algebras $N \subset M$ utilizing only features which are intrinsically defined with respect to the smaller algebra $N$. The central ingredient in the specification of a Q-system is an endomorphism of $N$, $\theta: N \rightarrow N$. Given a fixed algebra $N$, each Q-system, $A$, based around a given endomorphism, $\theta \in \text{End}(N)$, defines a different extension of $N$ which we denote by $M_A \equiv N \times_{\theta} A$. Famously, the crossed product of a von Neumann algebra by a finite group can be shown to be a special kind of Q-system extension. Thus, the Q-system presents itself as a tool for constructing algebraic extensions which includes crossed products as a subcase. 

Unfortunately, in their original form Q-systems can only be used to construct extensions with finite index. The index of an inclusion is sometimes qualitatively described as a measure of how many times a subalgebra `fits inside' a larger algebra in which it is embedded. When working with inclusions of properly infinite algebras, which appear under general considerations in quantum field theory and holography, it is somewhat rare to find inclusions of finite index. Moreover, the crossed product of a von Neumann algebra by a locally compact group which is not finite will always give rise to an inclusion of infinite index. Finally, the existence of finite index inclusions is intimately related to the existence of a special class of maps on von Neumann algebras called conditional expectations. In fact, the existence or non-existence of conditional expectations is a key consideration in the theory of quantum error correction. Inclusions of infinite index may not admit conditional expectations, which appears to imply that they may not allow for exact quantum error correction. Approximate error correction is an important subject in quantum information theory and holography \cite{Almheiri:2014lwa,Pastawski:2015qua,Harlow:2016vwg,Faulkner:2020hzi,Faulkner:2020iou,Faulkner:2020kit,Faulkner:2022ada,Akers:2022qdl}. With these observations in mind, it is of great interest to establish a generalization of Longo's Q-system which can apply to infinite index inclusions. In this note, we obtain such an object.

It is also known that the quantum error correction nature of holography gives rise to a generalized entropy formula for the bulk encoded in the boundary. In related recent work~\cite{AliAhmad:2024saq}, we capitalize on this observation to derive a generalized entropy formula for inclusions of von Neumann algebras admitting only a generalized conditional expectation. It is seen that the relaxation of a conditional expectation down to a generalized one frees the area operator from centrality, making it more consistent with expectations from quantum gravity and the backreaction of matter on spacetime. The philosophy of this note, namely that of extending an algebra by some non-local complex operators is infused in our result on generalized entropy as one can think of the encoding of the bulk into the boundary as a \textit{spatial} Q-system which is not one of finite index.  

To center our discussion, we formulate our presentation around the problem of constructing Q-systems which describe crossed product algebras. As we have addressed, the standard Q-system is sufficient for describing the crossed product of a von Neumann algebra by a finite group, but this leaves open the question of extending this structure to crossed products by infinite groups. More broadly, we divide our work into three cases, here $N \subset M$ is an inclusion of operator algebras:
\begin{enumerate}
    \item There exists a conditional expectation of finite index between $M$ and $N$,
    \item There exists a conditional expectation between $M$ and $N$ of possibly infinite index,
    \item There need not exist conditional expectations between $M$ and $N$. 
\end{enumerate}
The first case coincides with Longo's construction which we shall henceforward refer to as the standard Q-system. In recent work \cite{Del_Vecchio_2018}, a relaxation of the standard Q-system was proposed, called the generalized Q-system, which covers the second case. The main contribution of this work is to introduce a new notion called the \textit{spatial Q-system} which subsumes all three cases.

The paper is organized as follows. In Section \ref{i ovw ce}, we sharply define the cases of inclusions we will be interested in defining Q-systems for. In Section \ref{sec: QCE} we systematically derive the definition of the generalized Q-system which characterizes inclusions of operator algebras admitting conditional expectations of possibly infinite index. We explore the crossed product by a finite group as a paradigmatic example of this structure. In Section \ref{sec: Longo} we introduce the standard Q-system as a special case of the generalized Q-system. In Section \ref{sec: QOVW} we repeat the analysis of the previous section when the conditional expectation has been replaced with a non-unital operator valued weight. We demonstrate that the features of the Q-system are retained in a relaxed form even in this case leading to the notion of a Spatial Q-system. In Section \ref{sec: spatial recon}, we prove a reconstruction theorem establishing how one can build an extension algebra from the data of a Spatial Q-system. In Section \ref{sec: comp} we provide a comparison between the distinct notions of Q-system which coincide with the three cases introduced above. In particular, we describe how these constructions differ on a mathematical level in terms of their categorical structure, and on a physical level in terms of the level of quantum error correction they admit or the symmetry objects the theory can entertain. We conclude in Section \ref{sec: disc} in which we identify several physical settings that require a theory of infinite index inclusions and therefore constitute interesting directions for future work. There are two appendices. The first, Appendix~\ref{app: prelim}, where  we review relevant background from von Neumann algebras including the modular theory of weights, and index theory. We also provide context for the Q-system by reviewing the crossed product as a grouplike extension of a von Neumann algebra. The second, Appendix~\ref{app: Cat vN}, briefly reviews the categorification of structures related to von Neumann algebras and maps between them.

\section{Inclusions, Operator Valued Weights, and Conditional Expectations}\label{i ovw ce}

The main structure we are interested in studying in this note is the inclusion of von Neumann algebras $i: N \hookrightarrow M$.\footnote{For our purposes, all inclusions are taken to be unital.} The tool we will utilize for understanding the statistical embedding of $N$ inside of $M$ is the notion of an operator valued weight \cite{haagerup1979operator,haagerup1979operator2}. Recall that an operator valued weight is a map $T: M_+ \rightarrow \hat{N}_+$, where here $\hat{N}$ is the algebra of operators \emph{affiliated}\footnote{Given a representation $\pi: N \rightarrow B(H)$, a linear map $\mathcal{O}: H \rightarrow H$ is affiliated with $N$ if it commutes with all operators in the commutant of $N$. If, moreover, $\mathcal{O}$ is a bounded operator, then it belongs to $N$.} with $N$, satisfying the properties
\begin{enumerate}
	\item $T(\sum_{i} \lambda_i m_i) = \sum_{i} \lambda_i T(m_i), \qquad \forall \lambda_i \in \mathbb{R}_+, \; m_i \in M$,
	\item $T(i(n)^* \; m \; i(n)) = n^* \; T(m) \; n, \qquad \forall m \in M, \; n \in N$. 
\end{enumerate}

The `domain' of an operator valued weight is defined
\beq
	\mathfrak{n}_{T} \equiv \{m \in M \; | \; \norm{T(m^*m)} < \infty\},
\eeq
and we denote by $\mathfrak{m}_T \equiv \mathfrak{n}_T^* \mathfrak{n}_T$ the span of products $m_1^* m_2$ for $m_1,m_2 \in \mathfrak{n}_T$. When restricted to $\mathfrak{m}_T$, the operator valued weight can be regarded as a map $T: \mathfrak{m}_T \rightarrow N$ which satisfies the homogeneity property
\beq \label{OVW Homogeneity}
	T(i(n_1) \; m \; i(n_2)) = n_1 \; T(m) \; n_2, \qquad n_1,n_2 \in N, \; m \in \mathfrak{m}_T. 
\eeq
By a slight abuse of notation, we will often regard $T$ as a map from $M$ to $\hat{N}$ satisfying \eqref{OVW Homogeneity}, although strictly speaking one should carefully consider the domains involved. We say that $T$ is (i) \emph{faithful} if $T(m^*m) = 0 \iff m = 0$, (ii) \emph{semifinite} if $\mathfrak{n}_T$ is weakly dense in $M$, and (iii) \emph{normal} if the sequence $\{T(m_i)\}_{i \in \mathcal{I}}$ converges to $T(m)$ in $\hat{N}_+$ when the sequence $\{m_i\}_{i \in \mathcal{I}}$ converges to $m$ in $M_+$. We denote the space of faithful, semifinite, normal operator valued weights associated with the inclusion $i: N \hookrightarrow M$ by $P(M,N)$. 

In the event that $T(\mathbb{1}_M) = \mathbb{1}_N$, the operator valued weight may be regarded as a norm one projection of vector spaces. Such a norm one projection which satisfies the homogeneity property of an operator valued weight is called a conditional expectation. We denote the space of conditional expectations from $M$ to $N$ by $C(M,N) \subset P(M,N)$.\footnote{Here we are restricting our analysis to faithful, semifinite, normal conditional expectations although one could consider more general cases.} In the event that $N = \mathbb{C}$, an operator valued weight $T \in P(M,\mathbb{C})$ is nothing but an ordinary weight. A conditional expectation $\varphi \in C(M,\mathbb{C})$ is called a state. For ease of notation we will often write $P(M) \equiv P(M,\mathbb{C})$, and $C(M) \equiv C(M,\mathbb{C})$. 

In this note, we organize our presentation as a sequence of increasing abstraction for an inclusion $N\subset M$
\begin{encase}
	\item The sets $C(M,N)$ and $C(\pi(N)',\pi(M)')$ are both non-empty. 
	\item Either $C(M,N)$ or $C(\pi(N)',\pi(M)')$ is non-empty.
	\item $i: N \hookrightarrow M$ is an inclusion for which the set $P(M,N)$ is non-empty. 
\end{encase}
Clearly, Case 1 $\subset$ Case 2 $\subset$ Case 3. The crossed product by a finite group fits into Case 1, while the crossed product by an infinite locally compact group fits into Case 3. Case 2 is a useful intermediate case that will be used to show how some of the structure in Case 1 can be relaxed. 

\section{Q-Systems for Conditional Expectations} \label{sec: QCE}

In this section we cover Case 1 and Case 2. Given an inclusion $i: N \hookrightarrow M$, we first demonstrate how the existence of a conditional expectation $E: M \rightarrow N$ endows $M$ with the structure of a module with $N$-valued inner product. This inner product allows us to regard $M$ as being generated by elements of $N$ along with a basis of elements in $M$. Working in the GNS representation of $M$, the existence of the conditional expectation can moreover be brought into correspondence with the existence of an intertwining isometry $W$ that implements $E$ spatially. This map is naturally related to the Jones projection. In the event that $E$ has finite index, we can also construct a second isometry $\overline{W}$ implementing a dual conditional expectation on the commutant inclusion. This is the content of Jones' basic construction. The pair of isometries $W$ and $\overline{W}$ form a conjugate system and allow us to recover the standard Q-system as introduced by Longo. We explore the categorification of this notion and demonstrate how it allows us to understand the algebra $M$ as emerging entirely from data contained in $N$. 

Let $i: N \hookrightarrow M$ be an inclusion of von Neumann algebras admitting a conditional expectation $E: M \rightarrow N$.\footnote{For simplicity, we will work in the context that $C(M,N)$ is non-empty while $C(\pi(N)',\pi(M)')$ is possibly empty. Differences between this case and the alternative are discussed in depth in \cite{Del_Vecchio_2018}.} Given an $E$-invariant state, $\varphi \in C(M)$, it is not difficult to show that $\varphi = \varphi_0 \circ E$ where $\varphi_0 \equiv \varphi \circ i \in C(N)$. We denote by $L^2(M;\varphi)$ the GNS Hilbert space of $M$ with respect to $\varphi$, and by $L^2(N;\varphi_0)$ the GNS Hilbert space of $N$ with respect to $\varphi_0$. In fact, $L^2(N;\varphi_0) \subset L^2(M;\varphi)$ as Hilbert spaces with with $L^2(N;\varphi_0)$ regarded as the set of vectors in $L^2(M;\varphi)$ which are generated by acting on $\xi_{\varphi}$ with operators $\pi_{\varphi} \circ i(n)$ for each $n \in N$. 

The conditional expectation $E$ allows us to define an $N$-valued bilinear in $M$ by
\beq
	G_E: M^{\times 2} \rightarrow N, \qquad G_E(m_1,m_2) \equiv E(m_1^* m_2). 
\eeq
By a generalization of the Gram-Schmidt procedure, the existence of the $N$-valued inner product $G_E$ implies the existence of a basis $\{\lambda_i\}_{i \in \mathcal{I}} \subset M$ such that any element $m \in M$ admits a collection of $N$-valued coefficients $\{m_i\}_{i \in \mathcal{I}} \subset N$ such that
\beq \label{Resolution by Pimsner-Popa}
	m = \sum_{i \in \mathcal{I}} \lambda_i^* \; i(m_i).
\eeq
In particular, the coefficients $m_i$ are obtained by taking the $N$-valued inner product of $m$ with each basis element, $m_i = G_E(\lambda_i^*,m)$. The collection $\{\lambda_i\}_{i \in \mathcal{I}}$ was first recognized by Pimsner and Popa \cite{pimsner1986entropy} and is subsequently referred to as a Pimsner-Popa basis. In fact, a Pimsner-Popa basis can be defined more axiomatically as a collection of elements $\{\lambda_i\}_{i \in \mathcal{I}} \subset M$ satisfying two properties:
\begin{enumerate}
	\item $E(\lambda_i \lambda_j^*) = q_i \delta_{ij}$ where $q_i \equiv E(\lambda_i \lambda_i^*)$ is a projection in $N$,
	\item $L^2(M) = \sum_{i \in \mathcal{I}} \pi(\lambda_i^*) L^2(N;\varphi_0)$.  
\end{enumerate}
The second of these properties is equivalent to Eq.~\eqref{Resolution by Pimsner-Popa}. We conclude that the full algebra $M$ may be regarded as the von Neumann union $M \simeq i(N) \vee \{\lambda_i\}_{i \in \mathcal{I}}$. 

The cardinality of the set $\{\lambda_i\}_{i \in \mathcal{I}}$ is not an invariant, and in general there can be many Pimsner-Popa bases associated with a given conditional expectation $E$. With this being said, the operator
\beq
	\text{Ind}(E) \equiv \sum_{i \in \mathcal{I}} \lambda_i^* \lambda_i
\eeq
is an invariant and defines a general notion of the index of the conditional expectation $E$ even when $M$ and $N$ are not factors. In general, $\text{Ind}(E)$ takes values in the algebra affiliated with the center of $M$. We say the index is finite if it is a bounded operator, in which case it is a central element of $M$. A useful result which we will make use of later on is that, if $N$ is a properly infinite algebra, the index will be finite if and only if there exists a Pimsner-Popa basis of cardinality one \cite{Del_Vecchio_2018}. 

Since $\varphi$ and $\varphi_0$ are states, $\mathbb{1}$ belongs to their domains. We can therefore form Connes' spatial operator as in \eqref{Connes Spatial Operator}:
\beq
	W \equiv K^{\varphi}_{\eta_{\varphi}(\mathbb{1})}: L^2(N;\varphi_0) \rightarrow L^2(M;\varphi), \qquad W(\eta_{\varphi_0}(n)) = \pi_{\varphi} \circ i(n) \eta_{\varphi}(\mathbb{1}) = \eta_{\varphi} \circ i(n). 
\eeq	
Here, we have used the fact that $L^2(M;\varphi)$ is a representation space for the algebra $N$ via the representation $\pi_{\varphi} \circ i: N \rightarrow B(L^2(M;\varphi))$. By the properties of the spatial operator, $W \in \text{Hom}^0_N(\pi_{\varphi_0},\pi_{\varphi} \circ i)$. The formal adjoint, $W^{\dagger} \in \text{Hom}^0_{N}(\pi_{\varphi} \circ i, \pi_{\varphi_0})$ is defined by the equation
\beq \label{Adjoint intertwiner}
	g_{L^2(M;\varphi)}\bigg(W(\eta_{\varphi_0}(n)), \eta_{\varphi}(m)\bigg) = g_{L^2(N;\varphi_0)}(\eta_{\varphi_0}(n), W^{\dagger}(\eta_{\varphi}(m))\bigg), \qquad \forall n \in \mathfrak{n}_{\varphi_0}, m \in \mathfrak{n}_{\varphi}. 
\eeq
Expanding Eq.~\eqref{Adjoint intertwiner}, we can write
\beq \label{Adjoint intertwiner 2}
	\varphi_0 \circ E(i(n)^* m) = \varphi_0(n^* E(m)) = \varphi_0\bigg(n^* \eta_{\varphi_0}^{-1}(W^{\dagger}(\eta_{\varphi}(m))\bigg).
\eeq
To move from the first to the second equality we have used the homogeneity property of the conditional expectation. From Eq.~\eqref{Adjoint intertwiner 2} we conclude that
\beq \label{Adjoint of W}
	W^{\dagger}(\eta_{\varphi}(m)) = \eta_{\varphi_0} \circ E(m). 
\eeq

By construction, the map $W$ implements the conditional expectation $E$ in the following sense. For any $m \in M$ and $n \in \mathfrak{n}_{\varphi_0}$ we have
\begin{flalign}
	W^{\dagger} \pi_{\varphi}(m)  W(\eta_{\varphi_0}(n)) &= W^{\dagger} \pi_{\varphi}(m) \eta_{\varphi} \circ i(n) \nonumber \\
	&= W^{\dagger} \eta_{\varphi}(m i(n)) = \eta_{\varphi_0} \circ E(m i(n)) = \pi_{\varphi_0} \circ E(m) \eta_{\varphi_0}(n). 
\end{flalign}
In other words,
\beq
	E(m) = \pi_{\varphi_0}^{-1} \circ \text{Ad}_{W^{\dagger}} \circ \pi_{\varphi}(m), \qquad \forall m \in M. 
\eeq
Formally, this implies that $W: L^2(N;\varphi_0) \rightarrow L^2(M;\varphi)$ is a spatial implementation of the conditional expectation $E$ when viewed as a completely positive map. The fact that $E$ is a unital map implies that $W$ is an isometry:
\beq
	\mathbb{1} = W^{\dagger} \pi_{\varphi}(\mathbb{1}) W = W^{\dagger} W. 
\eeq

On the other hand, the composition $P \equiv W W^{\dagger}: L^2(M;\varphi) \rightarrow L^2(M;\varphi)$ is an orthogonal projection map. Since $W \in \text{Hom}^0_N(\pi_{\varphi_0},\pi_{\varphi} \circ i)$ and $W^{\dagger} \in \text{Hom}^0_N(\pi_{\varphi} \circ i, \pi_{\varphi_0})$ the projection $P$ belongs to $\text{Hom}^0_N(\pi_{\varphi} \circ i, \pi_{\varphi} \circ i) = \pi_{\varphi} \circ i(N)'$. This is called the Jones projection. Its not difficult to show that
\beq
	P \pi_{\varphi}(m) P = \pi_{\varphi} \circ i \circ E(m) P, \qquad \forall m \in M. 
\eeq
In this respect, the Jones projection implements the conditional expectation entirely within the standard representation $L^2(M;\varphi)$. Using the projection $P$ we define the Jones basic extension $M_1 \equiv M \vee \{P\}$, where here $\{P\}$ is the algebra generated by $P$. Clearly, $M \subset M_1$ and, in fact, the inclusion $i_1: M \hookrightarrow M_1$ is anti-isomorphic to the commutant inclusion $i': \pi_{\varphi}(M)' \hookrightarrow \pi_{\varphi} \circ i(N)'$ \cite{kosaki1998type}. 

To summarize, in the above discussion we have shown that the existence of a conditional expectation $E: M \rightarrow N$ is equivalent to the existence of an endomorphism $\theta \equiv \pi_{\varphi} \circ i: N \rightarrow B(L^2(M;\varphi))$, a Pimsner-Popa basis $\{\lambda_i\}_{i \in \mathcal{I}} \subset M$, and an isometric intertwiner $W \in \text{Hom}^0_{N}(\pi_{\varphi_0}, \theta)$. In the sequel, we shall see how $(\theta, \{\pi_{\varphi}(\lambda_i)\}_{i \in \mathcal{I}}, W)$ is an example of a construction we refer to as a \emph{Spatial Q-system}. We will also demonstrate that each such triple \emph{defines} an algebraic extension complete with a conditional expectation (or more generally, an operator valued weight). For now, let us pause to provide a useful example of our construction in the form of the crossed product by a finite group. 

\subsection{Crossed Products by Finite Groups}

Let $(N,G,\alpha)$ be a von Neumann covariant system with $G$ a finite group. Denote by $M \equiv N \times_{\alpha} G$ the crossed product algebra. If $(H_G \equiv L^2(G,H), \pi_{\alpha},\lambda)$ is a covariant representation of $(N,G,\alpha)$ we may regard the crossed product as comprised of operators of the form
\beq \label{Crossed Product Expansion}
	m = \sum_{g \in G} \lambda(g) \pi_{\alpha}(m(g)).
\eeq
Comparing Eqs.~\eqref{Crossed Product Expansion} and \eqref{Resolution by Pimsner-Popa} we see that $\{\lambda(g)\}_{g \in G}$ are playing the role of a Pimsner-Popa basis. The conditional expectation $E: M \rightarrow N$ acts on elements of the form Eq.~\eqref{Crossed Product Expansion} as
\beq
	E(m) = m(e),
\eeq 
so that
\beq
	E(\lambda(h^{-1}) m) = E\bigg( \sum_{g \in G} \lambda(h^{-1} g) \pi_{\alpha}(m(g))\bigg) = \delta(h^{-1}g - e) m(g) = m(h). 
\eeq
In this case, the isometry $W$ intertwines the representations $\pi_{\alpha}$ and $\pi$ of $N$. Let $\Omega \in H_G$ be the vector representative of an $E$-invariant state $\varphi \in C(M)$. Then,
\beq
	W\bigg((\pi(n) \otimes \mathbb{1}) \Omega \bigg) = \pi_{\alpha}(n) \Omega. 
\eeq
Its formal adjoint acts as
\beq
	W^{\dagger}\bigg(\sum_{g \in G} \lambda(g) \pi_{\alpha}(m(g)) \Omega \bigg) = (\pi(m(e)) \otimes \mathbb{1}) \Omega. 
\eeq
The triple $(\pi_{\alpha}, \{\lambda(g)\}_{g \in G}, W)$ is the Spatial Q-system associated with the finite crossed product algebra. The index of this inclusion is equal to the order of the group, as can be observed from the simple computation:
\beq
	\sum_{g \in G} \lambda(g) \lambda(g^{-1}) = \sum_{g \in G} 1 = |G|. 
\eeq

\subsection{Recasting the Generalized Q-System}

In the previous section, we demonstrated that the existence of a conditional expectation associated with an inclusion of operator algebras can be brought into correspondence with the existence of a Spatial Q-system. We would now like to discuss how this construction can be recast in a manner that traces its pieces entirely to the algebra $N$. In this sense, we can view the generalized Q-system as a device for \emph{constructing} the algebra $M$ from the intrinsic structure of $N$. 

To reach this conclusion, we must work in the context where $N$ is properly infinite. Then, we can make use of the following theorem. For a proof, we refer the reader to \cite{Del_Vecchio_2018}. We should also mention that the seeds of this approach were first planted by Fidaleo and Isola in \cite{fidaleo1995conjugate,fidaleo1999canonical}.  
\begin{theorem} \label{Generalized Q-system thm}
Let $N$ be a properly infinite von Neumann algebra and $\theta: N \rightarrow N$ an endomorphism. Then, the following are equivalent:
\begin{enumerate}
	\item There exist von Neumann algebras $N_2 \subset N_1 \subset N$ such that $C(N_1,N_2)$ is non-empty, and $\theta$ is the canonical endomorphism of the inclusion $N_1 \subset N$.
	\item There exists a von Neumann algebra $M$ such that $N \subset M$ with canonical endmorphism $\gamma \in \text{End}(M)$, such that $C(M,N)$ is non-empty and $\theta = \gamma\rvert_{N}$. 
\end{enumerate}
Moreover, the canonical endomorphism $\gamma$ intertwines conditional expectations $\tilde{E} \in C(N_1, N_2)$ and $E \in C(M,N)$ such that $\tilde{E} \circ \gamma = \gamma \circ E$. This defines a bijection between $C(N_1,N_2)$ and $C(M,N)$. 
\end{theorem}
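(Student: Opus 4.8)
The plan is to prove the two implications separately and then package the conditional-expectation correspondence as a bijection at the end. Throughout I would work in a common standard form, realizing $N$ and the ambient algebras on their $L^2$ spaces so that each canonical endomorphism is the honest endomorphism $\gamma = \text{Ad}(J_N J_M)$ built from modular conjugations, writing $j_X \equiv \text{Ad}(J_X)$. The proper infiniteness of $N$ is precisely what guarantees that these canonical endomorphisms exist as genuine injective, normal, unital endomorphisms rather than merely as sectors, and that the relevant algebras can be placed in a single standard form; I would flag at the outset that every structural manipulation below silently uses this. The two facts I would isolate as lemmas are: (i) the restriction/shift property, that for $N \subset M$ with canonical endomorphism $\gamma$ one has $\gamma(M) \subset N$ and $\gamma|_N$ is itself the canonical endomorphism of the inclusion $\gamma(M) \subset N$; and (ii) that the Jones basic construction inverts this shift.

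For $(2) \Rightarrow (1)$ I would set $N_1 \equiv \gamma(M)$ and $N_2 \equiv \gamma(N) = \theta(N)$. Since $N \subset M$ and $\gamma(M) \subset N$, the tower $N_2 \subset N_1 \subset N$ is immediate, and lemma (i) says exactly that $\theta = \gamma|_N$ is the canonical endomorphism of $N_1 \subset N$, which is the structural half of condition 1. For the conditional expectation I would transport $E$ through $\gamma$: because $\gamma$ is an injective normal unital homomorphism, $\tilde E \equiv \gamma \circ E \circ \gamma^{-1}$ is well defined on $N_1 = \gamma(M)$, takes values in $N_2 = \gamma(N)$, and inherits positivity, unitality, idempotence, and bimodularity from $E$, so $\tilde E \in C(N_1,N_2)$. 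By construction it satisfies the intertwining relation $\tilde E \circ \gamma = \gamma \circ E$.

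For $(1) \Rightarrow (2)$, the genuine reconstruction, I would build $M$ as the Jones basic extension of $N_1 \subset N$, i.e.\ $M \equiv J_N N_1' J_N = (J_N N_1 J_N)'$, which contains $N = J_N N' J_N$ because $N_1 \subset N$. The heart of the argument is to compute the canonical endomorphism of $N \subset M$ and show its restriction to $N$ is $\theta$; this is the content of lemma (ii), that the basic construction undoes the canonical-endomorphism shift, so that in the tower $N_1 \subset N \subset M$ the canonical endomorphism of the upper inclusion restricts on $N$ to the canonical endomorphism $\theta$ of the lower one. I would then produce $E \in C(M,N)$ by transporting $\tilde E$ in the opposite direction, $E \equiv \gamma^{-1} \circ \tilde E \circ \gamma$, checking that it is valued in $N$ (since $\tilde E$ is valued in $N_2 = \gamma(N)$) and is a genuine faithful normal conditional expectation.

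The hard part will be lemma (ii): identifying the modular conjugation $J_M$ of the basic extension in terms of $J_N$ and $J_{N_1}$ and establishing $\text{Ad}(J_N J_M)|_N = \text{Ad}(J_{N_1} J_N) = \theta$, since $M$ is \emph{not} in standard form on $L^2(N)$ and one must re-represent on a common space before the modular data of the three algebras can be made to interlock. Once both implications are in hand the bijection is nearly free: the assignments $E \mapsto \gamma \circ E \circ \gamma^{-1}$ and $\tilde E \mapsto \gamma^{-1} \circ \tilde E \circ \gamma$ are mutually inverse by the injectivity of $\gamma$ together with the intertwining relation $\tilde E \circ \gamma = \gamma \circ E$, and faithfulness and normality are preserved in both directions, yielding the claimed bijection between $C(N_1,N_2)$ and $C(M,N)$.
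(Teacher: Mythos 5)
The paper does not actually prove Theorem \ref{Generalized Q-system thm}; it states it and defers the proof to \cite{Del_Vecchio_2018} (building on Longo and Fidaleo--Isola). Your outline is essentially the argument used there: realize the canonical endomorphisms as $\mathrm{Ad}(J_N J_M)$ in a joint standard form, take $N_1=\gamma(M)$, $N_2=\gamma(N)=\theta(N)$ for one direction, build $M$ as the Jones extension $J_N N_1' J_N$ for the other (where the key identity $J_M=J_N J_{N_1}J_N$, valid once the cyclic vector is also standard for $N_1$, gives $\gamma=\mathrm{Ad}(J_{N_1}J_N)$ and hence $\gamma|_N=\theta$), and transport conditional expectations by conjugation with $\gamma$. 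The only caution is that the theorem as printed leaves $N_2$ underdetermined: for the bijection of conditional expectations to hold you must read condition~1 with $N_2=\theta(N)$, exactly as your construction implicitly does, and proper infiniteness is what lets you identify $L^2(M)$ with $L^2(N)$ as $N$-modules so that $\gamma|_N$ and the basic extension can be compared on a single Hilbert space.
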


The above theorem tells us that we can reformulate the Spatial Q-system associated with a conditional expectation $E \in C(M,N)$ in terms of a corresponding conditional expectation $\tilde{E} \in C(N_1,N_2)$. In this case, the Q-system is comprised of (i) an endomorphism $\theta \in \text{End}(N)$, (ii) a Pimsner-Popa basis $\{\nu_i\}_{i \in \mathcal{I}} \subset N$, and an isometric intertwiner $\tilde{W}$ which implements $\tilde{E}$. Since $\tilde{W}$ can be taken to act entirely in the GNS Hilbert space of $N$, it can be identified with an element $\tilde{w} \in N$ such that $\pi_{\varphi_0}(\tilde{w}) = \tilde{W}$. Its operator product with the rest of the algebra is dictated by the fact that it is regarded as an intertwiner of \emph{endomorphisms}:
\beq
	\tilde{w} \in \text{Hom}^0_{\text{End}(N)}(\theta, id) \implies \tilde{w} n = \theta(n) \tilde{w}, \; \forall n \in N. 
\eeq
In fact, this is equivalent to the observation that the GNS representation $\pi_{\varphi_0}$ plays the role of the identity in the category of modules of $N$. We will revisit this point in the sequel. 

The collection $(\theta, \{\nu_i\}_{i \in \mathcal{I}}, \tilde{w})$ defines what is referred to as a generalized Q-system for the algebra $N$. If $N$ is properly infinite, theorem \ref{Generalized Q-system thm} ensures that the generalized Q-system gives rise to a dual Spatial Q-system $(\theta, \{\lambda_i\}_{i \in \mathcal{I}}, W)$ which identifies a conditional expectation $E \in C(M,N)$. 

\subsection{Standard Q-systems} \label{sec: Longo}

The previous subsection provides a complete spatial classification of inclusions admitting conditional expectations. We would now like to consider a special case in which the conditional expectation in question possesses a finite index. In this context, the generalized Q-system reduces to the standard Q-system originally introduced by Longo \cite{cmp/1104254494}. The standard Q-system possesses a very compelling categorical structure which we describe in detail. The typical presentation of this subject is rather technical, so we have taken care to include an extended discussion which explains why the Q-system is capable of producing extensions of von Neumann algebras \emph{from within}.

Let $i: N \hookrightarrow M$ be an inclusion of properly infinite von Neumann algebras, and $E \in C(M,N)$ a conditional expectation with finite index. As alluded to above, in this case one can construct a Pimsner-Popa basis consisting of a single element. As a result, the generalized Q-system dual to $E$ is a triple $(\theta, x, w)$ consisting of an endomorphism $\theta \in \text{End}(N)$, an isometric intertwiner $w \in \text{Hom}^0_{\text{End}(N)}(id,\theta)$ and an element $x \in N$. The element $x$ can be chosen to be an intertwiner $x \in \text{Hom}^0_{\text{End}(N)}(\theta,\theta^2)$. That is,
\beq
	x \theta(n) = \theta^2(n) x, \qquad \forall n \in N. 
\eeq
It turns out that the triple $(\theta,x,w)$ defines a structure known as a \emph{Frobenius algebra} inside of the category $\text{End}(N)$. 

As is reviewed in Appendix \ref{app: Cat vN}, $\text{End}(N)$ is a $C^*$-tensor category with objects given by endomorphisms $\alpha: N \rightarrow N$ and arrows given by intertwiners between endomorphisms. A $C^*$ Frobenius algebra in $\text{End}(N)$ is a collection $A = (\theta,x,w)$ with $\theta: N \rightarrow N$, $w \in \text{Hom}^0_{\text{End}(N)}(id,\theta)$, and $x \in \text{Hom}^0_{\text{End}(N)}(\theta,\theta^2)$ satisfying the following three conditions:
\begin{enumerate}
	\item $w^* x = \theta(w^*) x = \mathbb{1}$,
	\item $x^2 = \theta(x) x$, and
	\item $\theta(x^*) x = x x^* = x^* \theta(x)$\footnote{It was pointed out to us by Roberto Longo that this condition is unnecessary and is implied by the first two as shown in Ref.~\cite{longo1996theorydimension}}. 
\end{enumerate}
If $x^*x$ is a multiple of $\mathbb{1}$, we say that $A$ is special. If, moreover, $w^* w = d_A \mathbb{1}_{id_N}$ and $x^* x = d_A \mathbb{1}_{\theta}$ then $A$ is called standard. In general $d_A = \sqrt{\text{dim}(\theta)}$ is called the dimension of $A$. 

The fundamental example of a Q-system which motivates its usefulness in extension theory is given as follows. Let $N$ be a von Neumann factor and $M$ a von Neumann algebra with $\alpha: N \rightarrow M$ and $\overline{\alpha}: M \rightarrow N$ a conjugate pair of homomorphisms. Associated to this pair, we have a conjugate pair of intertwiners $(w,\overline{w})$. Then $A_{\alpha} = (\overline{\alpha} \circ \alpha, \overline{\alpha}(\overline{w}), w)$ is a $C^*$ Frobenius algebra inside $\text{End}(N)$. It is automatically special and will be standard if and only if $(w,\overline{w})$ is a standard pair. In that case $d_{A_{\alpha}} = \sqrt{\text{dim}(\alpha)}$. In the case that $i: N \hookrightarrow M$ is an inclusion of von Neumann algebras, the Q-system $A_{i} = (\overline{i} \circ i, \overline{i}(\overline{w}), w)$ is called the Q-system induced by $i$. In fact, every standard Q-system arises in this way:

\begin{theorem}[Reconstruction I] \label{recon1}
	Let $N$ be a properly infinite von Neumann factor and $A = (\theta,x,w)$ a $Q$-system in $\text{End}(N)$. Then, there exists a von Neumann algebra $M_A \equiv N \times_{\theta} A$ and an inclusion $i_A: N \hookrightarrow M_A$ such that $A$ is the $Q$-system induced by $i_A$. Moreover, we obtain a pair of conditional expectations
	\beq
		E_A: M_A \rightarrow N, \qquad E_A(m) = d_A^{-1} w^* \overline{i}_A(m) w,
	\eeq
	and 
	\beq
		\overline{E}_A: N \rightarrow N_1, \qquad \overline{E}_A(n) = d_A^{-1} x^* \overline{i}_A \circ i_A(n) x.
	\eeq
	Here, $N_1 \equiv \overline{E}_A(N)$ is called the canonical restriction of $N$. The latter conditional expectation is anti-isomorphic to a conditional expectation from $N'$ to $M_A'$ in a standard representation using the modular conjugation. 
\end{theorem}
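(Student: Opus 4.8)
The plan is to run the standard $Q$-system reconstruction, adapted to the properly infinite factor $N$ where finiteness of the index means the relevant Pimsner-Popa basis has a single element. First I would build $M_A$ by adjoining to $i_A(N)$ a single generator $v$ whose algebraic relations with $N$ are dictated entirely by the data $(\theta,x,w)$: the intertwiner $x \in \text{Hom}^0_{\text{End}(N)}(\theta,\theta^2)$ prescribes the quadratic relation fixing $v^2$, the unit $w \in \text{Hom}^0_{\text{End}(N)}(id,\theta)$ fixes the normalization and the $*$-structure on $v$, and $v$ intertwines $i_A$ with $i_A \circ \theta$. The key point is that conditions (1)--(3) defining the $Q$-system are precisely the associativity and unit constraints needed for these relations to close into a well-defined $*$-algebra: the condition $x^2 = \theta(x)x$ is the associativity constraint for the product encoded by $v$, while $w^* x = \theta(w^*) x = \mathbb{1}$ supplies the left and right unit laws. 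Because $N$ is properly infinite, it contains enough isometries to realize $v$ as a bounded operator on a Hilbert space built from the GNS data of $N$; taking the weak closure of the resulting $*$-algebra produces the von Neumann algebra $M_A \equiv N \times_{\theta} A$ together with the unital inclusion $i_A: N \hookrightarrow M_A$.

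Next I would exhibit the conjugate homomorphism $\overline{i}_A: M_A \to N$ and verify that $A$ is the $Q$-system it induces. Concretely, I would produce an intertwiner $\overline{w} \in \text{Hom}^0(id_{M_A}, i_A \circ \overline{i}_A)$ such that $(w,\overline{w})$ is a standard conjugate pair, satisfying the zig-zag equations $\overline{w}^* i_A(w) = \mathbb{1}$ and $w^* \overline{i}_A(\overline{w}) = \mathbb{1}$ up to the factor $d_A$, and check that $\overline{i}_A(\overline{w}) = x$. Reading off $\overline{i}_A \circ i_A = \theta$ from the construction then yields exactly $A = (\overline{i}_A \circ i_A, \overline{i}_A(\overline{w}), w) = A_{i_A}$, which is the assertion that $A$ is induced by $i_A$. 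This step is essentially bookkeeping with the intertwiner calculus, since the generator $v$ was defined so that $\overline{i}_A$ and $\overline{w}$ have prescribed images.

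With $\overline{i}_A$ in hand, the two maps are given by the stated formulas and I would verify each is a faithful normal conditional expectation. Unitality is immediate from standardness: $E_A(\mathbb{1}) = d_A^{-1} w^* w = \mathbb{1}$ and $\overline{E}_A(\mathbb{1}) = d_A^{-1} x^* x = \mathbb{1}$, while $N$-bimodularity and the fixed-point properties $E_A \circ i_A = id$, $\overline{i}_A \circ i_A = \theta$ follow by combining the homomorphism property of $\overline{i}_A$ with the intertwining relations $\theta(n) w = w n$ and $x \theta(n) = \theta^2(n) x$. Complete positivity and normality are inherited from the fact that each map has the form $m \mapsto d_A^{-1} s^* \overline{i}_A(m) s$, i.e. a normal completely positive map precomposed with a homomorphism, exactly as the isometry $W$ implemented $E$ in the discussion preceding the theorem. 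Finally, setting $N_1 \equiv \overline{E}_A(N)$, I would identify $\overline{E}_A$ with the dual conditional expectation supplied by the Jones basic construction: using the anti-isomorphism between the tower $M_A \supset N$ and the commutant inclusion built via the modular conjugation $J$, one conjugates $\overline{E}_A$ by $\text{Ad}_J$ to obtain a conditional expectation from $N'$ onto $M_A'$, giving the claimed anti-isomorphism.

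I expect the main obstacle to be the first step, namely establishing that $M_A$ genuinely exists as a von Neumann algebra. The algebraic relations are forced by $(\theta,x,w)$, but one must show they are consistent (this is where the Frobenius axioms are used in full) and, crucially, that the abstract $*$-algebra they generate admits a faithful normal representation so that its weak closure is a von Neumann algebra. Proper infiniteness of $N$ is the essential hypothesis that makes room for the generator $v$, and is exactly what forces the single-element Pimsner-Popa basis used throughout. Once $M_A$ and $\overline{i}_A$ are in place, the remaining identifications and the conditional-expectation checks are routine manipulations with the intertwiner relations and standardness.
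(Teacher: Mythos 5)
Your proposal follows essentially the same route as the paper: adjoining a single generator $v$ to $i_A(N)$ with relations $v i_A(n) = i_A\circ\theta(n)v$, $v^2 = i_A(x)v$, $v^* = i_A(w^*x^*)v$, using $w^*x=\mathbb{1}$ for the unit, $x^2=\theta(x)x$ for associativity, and the third Frobenius condition for the involution, then appealing to $E_A$ (state extension / induced representation) to obtain the von Neumann topology. Your added remarks on verifying the conjugate pair $(w,\overline{w})$ and the conditional-expectation formulas are consistent with what the paper defers to Longo's cited work, so the proposal is correct and matches the paper's proof in approach.
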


Theorem \ref{recon1} establishes that a standard Q-system for the factor $N$ is in one to one correspondence with an inclusion $i_A: N \hookrightarrow M_A$ along with a pair of Kosaki dual conditional expectations $E_A$ and $\overline{E}_A$. This observation was originally made by Longo who realized that 
\begin{enumerate}
	\item The set of conditional expectations between a von Neumann algebra $M$ and a properly infinite subfactor $N$ are in one to one correspondence with the set of intertwining isometries between the identity endomorphism of $N$ and the canonical endomorphism $\gamma: M \rightarrow M$ restricted to $N$,
	\item For the same $N$ and $M$, and given a representation of $M$, the set of conditional expectations between $N'$ and $M'$ are in one to one correspondence with the set of intertwining isometries between the identity endomorphism of $M$ and the canonical endomorphism $\gamma: M \rightarrow M$. 
\end{enumerate}
The standard Q-system is born out of the interplay between a conjugate pair of isometric intertwiners $(w,\overline{w})$ for an inclusion $i: N \hookrightarrow M$ and its conjugate homomorphism $\overline{i}: M \rightarrow N$. In this case, the map $\gamma \equiv i \circ \overline{i}: M \rightarrow M$ is precisely the canonical endomorphism of the inclusion, and $\gamma\rvert_N \equiv \overline{i} \circ i: N \rightarrow N$ is the dual canonical endomorphism, which is anti-isomorphic (by modular conjugation) to the canonical endomorphism of the commutant inclusion $M' \subset N'$. Thus, $w \in \text{Hom}^0_{\text{End}(N)}(id_N, \gamma\rvert_N)$ and $\overline{w} \in \text{Hom}^0_{\text{End}(M)}(id_M,\gamma)$, which, by the above, are in one to one correspondence with conditional expectations $E_w: M \rightarrow N$ and $\overline{E}_{\overline{w}}: N' \rightarrow M'$.  

We would now like to confront the problem of constructing the algebra $M_A$ directly from the ingredients of the Q-system. $M_A$ is obtained by adjoining to $N$ a new element $v$ subject to the algebraic constraints:
\beq
	v i_A(n) = i_A \circ \theta(n) v, \; v^2 = i_A(x) v, \; v^* = i_A(w^*x^*) v, \; \forall n \in N. 
\eeq
We demand that any element in the algebra $M_A$ can be written in the form
\beq
	m = i_A(n) v, \; n \in N. 
\eeq
In this regard, we recognize $v$ as defining a Pimnser-Popa basis. Since we are working abstractly, it is not immediate that this union generates a $*$-algebra, let alone a von Neumann algebra. From this perspective, it is clear that one needs to impose conditions on $v$. In fact, the properties defining the Q-system are precisely the right conditions which need to be imposed to ensure $M_A$ has the appropriate algebraic structure.

The first thing we need is to impose some commutation relation between $v$ and included elements of $N$. This specifies the product structure on $M_{A}$. We take the simple relation
\begin{equation}
v i_A(n) = i_A \theta(n) v.
\end{equation}
In other words, passing $v$ through an included element just restricts that same element to the sector described by $\theta$. To define the product of $v$ with itself, we make use of the element $x$ in the Q-system to write:
\begin{equation}
    v^{2} = i_A(x) v. 
\end{equation}
If we want $M_A$ to be a von Neumann algebra, it must also be unital. To construct the identity element, we make use of the first Q-system property, $w^* x = \mathbb{1}_N$. Then, we find that for all $m = i_A(n) v \in M_A$:
\begin{align}
    i_A(w^{*})v i_A(n) v &= i_A(w^{*}) i_A \theta(n) v^{2}, \\
    &= i_A(w^{*} \theta(n)x) v , \\
    &= i_A(n w^{*}x) v , \\
    &= i_A(n) v , \implies i_A(w^{*})v = \mathbb{1}_{M_A}.
\end{align}

The algebra $M_{A}$ must also be associative. On the one hand we have
	\begin{align}
		(i_A(n_1 ) v i_A(n_2)) v i_A(n_3 ) v &= i_A(n_1 \theta(n_2) x \theta(n_3)x)v,
	\end{align}
	while on the other
	\begin{align}
		i_A(n_1 ) v (i_A(n_2) v i_A(n_3 ) v) &= i_A(n_1) v i_A(n_2 \theta(n_3) x) v, \\
		&= i_A(n_1 \theta(n_2) \theta^2(n_3) \theta(x)x) v.
	\end{align}
	Using the second condition for the Q-system, $\theta(x) x = x^2$, in the above equation gives
	\begin{align}
		i_A(n_1 ) v (i_A(n_2) v i_A(n_3 ) v) 
		&= i_A(n_1 \theta(n_2) \theta^2(n_3) x x) v \\
		&= i_A(n_1 \theta(n_2) x\theta(n_3)  x) v,
	\end{align} 
	and thus associativity is ensured.
 
At this point we have shown that $M_A$ has the structure of a unital, associative algebra. A von Neumann algebra is moreover involutive, so we have to construct an involution. The involution of elements $i_A(n) \in M_A$ is induced from the involution on $N$, so it remains to define an involution for $v$. We take
\begin{equation}
    v^{*}=  i_A(w^{*}x^{*})v. 
\end{equation}
For an arbitrary element $m = i_A(n) v \in M_{A}$, we must ensure that its involution $m^{*}$ is also in $M_{A}$. This is immediate, as:
\begin{align}
    \left( i_A(n) v\right)^{*} &= v^{*} i_A(n)^{*}, \nonumber \\
    &= i_A(w^{*}x^{*})v i_A(n)^{*},  \nonumber\\
    &= i_A(w^{*}x^{*} \theta(n)^{*})v \nonumber.
\end{align}
Moreover, we need to ensure that the involution is antimultiplicative. We first simplify the product of two elements and then take the adjoint
\begin{align}
     \left[(i_A(n_{1}) v) (i_A(n_{2})v)\right]^{*} &= \left[ i_A(n_{1} \theta(n_{2}) x) v \right]^{*}, \nonumber \\
     &= i_A\left[ w^{*} x^{*} \theta(x^{*} \theta(n_{2}^{*}) n_{1}^{*}) \right] v, \nonumber \\
     &= i_A\left[ \theta(n_{1}\theta(n_{2}) x) x w \right]^{*} v,  \nonumber\\
     &= i_A\left[ \theta(n_{1}) \theta^2(n_{2}) x x w \right]^{*} v, \nonumber \\
     &= i_A\left[ \theta(n_{1}) x\theta(n_{2}) x  w \right]^{*} v, \nonumber \\
     &= i_A\left[  w^{*}x^{*} \theta(n_{2}^* ) x^{*} \theta(n_{1}^* ) \right] v, \nonumber
\end{align}
where we used the second property $\theta(x) x = x^2$ and the fact that $x$ intertwines $\theta$ with its square. To verify that it is anti-multiplicative, we compute
\begin{align}
    \left[(i_A(n_{1}) v) (i_A(n_{2})v)\right]^{*} &=  (i_A(n_{2})v)^{*} (i_A(n_{1})v)^{*},  \nonumber\\
   &=i_A\left[ w^{*}x^{*} \theta(n_{2}^{*} w^{*}x^{*} \theta(n_{1}^{*})) x\right] v, \nonumber \\
   &=i_A\left[x^{*} \theta( \theta(n_{1} x w n_{2}) x w \right]^{*} v,  \nonumber\\
   &=i_A\left[\theta(n_{1}) x x^{*} \theta( w n_{2}) x w \right]^{*} v, \nonumber \\
   &=i_A\left[ w^{*}x^{*} \theta(n_{2}^{*}) \theta(w^{*}) x x^{*} \theta(n_{1}^{*}) \right] v, \nonumber \\
   &=i_A\left[ w^{*}x^{*} \theta(n_{2}^{*}) x^{*} \theta(n_{1}^{*}) \right] v, \nonumber
\end{align}
where we used the fact that $x^{*}$ intertwines $\theta^{2}$ with $\theta$ and the final property of the Q-system, $x^{*} \theta(x) = x x^{*}$. 

In summary, we see that each of the conditions defining the Q-system translate to an algebraic property of the extended algebra $M_{A}$. The first property ensures the existence of a unit, the second property ensures the associativity of the product, and the third property allows us to define the involution. Finally, the existence of a conditional expectation $E_A: M_A \rightarrow N$ endows upon $M_A$ the proper topology required by a von Neumann algebra. Succinctly, $E_A$ acts as a state extension, which allows for standard representations of $M_A$ to be induced (in the sense of Rieffel) from standard representations of $N$.

\section{Q-Systems for Operator Valued Weights} \label{sec: QOVW}

In the previous sections we have explored the rich structure endowed upon von Neumann algebra when it admits a conditional expectation. Unfortunately, there are many instances in the study of von Neumann algebras in which conditional expectations are not accessible. These scenarios coincide with inclusions of infinite indexes. In physical applications like quantum field theory and holography, infinite index inclusions appear to be the norm rather than an exception. This is ample reason to try to port over the tools we have developed to the infinite index case.

A paradigmatic example of such an inclusion is the crossed product of a von Neumann algebra $N$ by a locally compact group $G$. This construction has emerged as fundamental in the study of quantum gravity, where it has been argued that the algebra of a gravitational subregion can often be understood as the modular crossed product of QFT subregion algebra with its modular automorphism group. As we have addressed in Section \ref{sec: CP primer}, the inclusion associated with a crossed product only admits a conditional expectation if the group in question is finite. This is intimately related to the fact that the index of this would-be conditional expectation is equal to the order of the group. Given an infinite group we are left with an operator valued weight. Nevertheless, we see that, irrespective of the group $G$, the structure of the crossed product bears a very close resemblance to the Q-system analysis. For example, in eqn. \eqref{CP Basis} the group generators $\lambda(g)$ continue to play the role of a Pimsner-Popa basis.

This is not a coincidence. In fact, all of the features of the Q-system introduced in Section \ref{sec: QCE} have analogs in the case of an inclusion admitting only an operator valued weight, provided we take sufficient care in our analysis. In this section, we undertake such an analysis. We formally define our notion of a Spatial Q-system which provides an alternative classification of inclusions, $N \subset M$, admitting operator valued weights in terms of the existence of a compatible triple consisting of a representation, a Pimsner-Popa basis, and an intertwiner. We discuss how this data can be regarded as intrinsic to the algebra $N$, and prove a reconstruction theorem which allows us to construct the extension algebra $M$ purely from this data. 

\subsection{Some subtleties}

To arrive at our definition of the Spatial Q-system, we will follow essentially the same procedure as we did in Section \ref{sec: QCE}, but working in the case where $E$ is replaced by an operator valued weight $T$. As we will see, this has profound implications for the structure of the ingredients appearing in the Q-system. 

Let $i: N \hookrightarrow M$ be an inclusion of von Neumann algebras admitting an operator valued weight $T \in P(M,N)$. Recall that the existence of $T$ implies the existence of faithful, semifinte, normal weights $\varphi \in P(M)$ and $\varphi_0 \in P(N)$ such that $\varphi = \varphi_0 \circ T$. Let $L^2(M;\varphi)$ and $L^2(N;\varphi_0)$ be the GNS Hilbert spaces of $M$ and $N$ with respect to these weights. Already, an important distinction arises as compared to the previous case; the Hilbert space $L^2(N;\varphi_0)$ cannot, in a rigorous sense, be regarded as a subspace of $L^2(M;\varphi)$. We will develop this point more closely, but the culprit is the unboundedness of the operator valued weight. In fact, this will be a central ingredient in the subsequent analysis. 

Despite the complication alluded to above, we can still use the operator valued weight $T$ to define an $N$-valued bilinear on\footnote{Again, by an abuse of notation, we will often regard $G_T$ as a map on $M^{\times 2}$.} $\mathfrak{n}_T$
\beq
	G_T: \mathfrak{n}_T^{\times 2} \rightarrow N, \qquad G_T(m_1,m_2) = T(m_1^* m_2). 
\eeq
The same Gram-Schmidt procedure that applied in the conditional expectation case continues to apply here, and we can once again construct a Pimsner-Popa basis\footnote{In principle, the index set $\mathcal{I}$ may be uncountable.} $\{\lambda_i\}_{i \in \mathcal{I}} \subset M$ which realizes $M$ as a von Neumann union $M \simeq i(N) \vee \{\lambda_i\}_{i \in \mathcal{I}}$. This was first observed by Herman and Ocneanu \cite{herman1989index}. 

Where the subtlety does arise, however, is in formulating the intertwiner $W$ that implements the operator valued weight. In the conditional expectation case, $W$ was intimately related to the Jones projection. That $L^2(N;\varphi_0)$ is not a strict Hilbert subspace of $L^2(M;\varphi)$ manifests itself in the fact that there is not a well defined Jones projection affiliated wtih an operator valued weight. Not unrelated, the intertwiner $W$ ceases to be an isometry. Instead, $W$ must be regarded as a densely defined, unbounded map. To see that this is the case, we can follow the same procedure as in Section \ref{sec: QCE}. 

We define
\beq \label{OVW Intertwiner}
	W: L^2(N;\varphi_0) \rightarrow L^2(M;\varphi), \qquad W(\eta_{\varphi_0}(n)) = \eta_{\varphi} \circ i(n). 
\eeq
For most $n \in N$, the operator $i(n) \in M$ does not belong to the domain of $\varphi$. Thus, the notation $\eta_{\varphi} \circ i(n)$ is purely formal, since $i(n)$ is not in the domain of $\eta_{\varphi}$. Nevertheless, we can still make sense of what Eq.~\eqref{OVW Intertwiner} is saying -- to each $n \in N$ we associate a possible non-normalizable vector $\eta_{\varphi}(n)$ associated with $L^2(M;\varphi)$.\footnote{To make this rigorous, we could make use of the rigged Hilbert space associated with $L^2(M;\varphi)$.} By construction, Eq.~\eqref{OVW Intertwiner} is an unbounded intertwiner $W \in \text{Hom}_N(\pi_{\varphi} \circ i, \pi_{\varphi_0})$. Here, again, $\pi_{\varphi} \circ i$ should be read as a representation of $N$ acting as possible unbounded operators affiliated with $B(L^2(M;\varphi))$. 

We can still compute the formal adjoint of $W$, and since $T$ possesses the same homogeneity property as a conditional expectation the computation between Eq.~\eqref{Adjoint intertwiner} and Eq.~\eqref{Adjoint of W} holds. That is, $W^{\dagger} \in \text{Hom}_N(\pi_{\varphi_0}, \pi_{\varphi} \circ i)$ acts as
\beq
	W^{\dagger}(\eta_{\varphi}(m)) = \eta_{\varphi_0} \circ T(m). 
\eeq 
In this respect, the intertwiner $W$ implements the operator valued weight as
\beq
	T(m) = \pi_{\varphi_0} \circ \text{Ad}_{W^{\dagger}} \circ \pi_{\varphi}(m),
\eeq
which is well defined for $m \in \mathfrak{n}_{T}$. However, the operator valued weight is not a unital map and thus $W$ ceases to be an isometry:
\beq
	\mathbb{1} \neq W^{\dagger} \pi_{\varphi}(\mathbb{1}) W = W^{\dagger} W. 
\eeq

\subsection{The Spatial Q-System} \label{sec: spatial recon}

In summary, the existence of an operator valued weight coincides with the emergence of a Pimsner-Popa basis along with an unbounded and non-isometric intertwiner. In Section \ref{sec: comp}, we will discuss the non-isometricity of the map $W$ in the context of quantum error correction. For now, we would like to organize our findings in a manner which makes reference only to the algebra $N$. In this way, we can regard this construction as defining an extension algebra which admits an operator valued weight. This leads us to the following definition:

\begin{definition}[Spatial Q-System]
	Let $N$ be a von Neumann algebra and $\varphi_0 \in P(N)$ a faithful, semifinite, normal weight with vector representative $\xi_{\varphi_0} \in L^2(N;\varphi_0)$. A \textbf{Spatial Q-system} for $N$ is a triple $(\pi, \{\Lambda_i\}_{i \in \mathcal{I}}, W)$ consisting of
	\begin{enumerate}
		\item A `representation' $\pi: N \rightarrow \widehat{B(H)}$, where here $\widehat{B(H)}$ is the set of possibly unbounded operators on the Hilbert space $H$,
		\item An intertwiner $W \in \text{Hom}_N(\pi_{\varphi_0},\pi)$, and
		\item A family of operators $\{\Lambda_i\}_{i \in \mathcal{I}} \subset B(H)$ satisfying
		\begin{enumerate}
			\item $W^{\dagger} \Lambda_i \Lambda_j^{\dagger} W \in \pi_{\varphi_0}(N)$ for all $i,j \in \mathcal{I}$, and  
			\item $H = \sum_{i \in \mathcal{I}} \Lambda_i^{\dagger} \pi(n) W(\xi_{\varphi_0})$.
		\end{enumerate}
	\end{enumerate}
	The two properties in condition $(3)$ are, in fact, equivalent to the those defining a Pimsner-Popa basis \cite{fidaleo_canonical_1999}.
\end{definition}

The role of a Q-system is to induce extensions of $N$, as we saw explicitly in Section \ref{sec: Longo}. The spatial Q-system also has an extension theory encapsulated in the following theorem:
\begin{theorem}[Reconstruction II]
    Let $N$ be a von Neumann algebra and $A = (\pi, \{\Lambda_{i}\}, W)$ a Spatial Q-system. Then, there exists an algebra $M_{A}$ extending $N$ along with an operator valued weight $T_{A} \in P(M_A, N)$. 
\end{theorem}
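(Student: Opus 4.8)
The plan is to realize $M_A$ concretely as operators on $H$ and then manufacture the operator valued weight by running the spatial formula $T(m) = \pi_{\varphi_0}^{-1} \circ \text{Ad}_{W^{\dagger}} \circ \pi(m)$ of Section~\ref{sec: QOVW} backwards. Concretely, I would set $\Omega \equiv W(\xi_{\varphi_0}) \in H$ and let $M_A$ be the von Neumann algebra generated on $H$ by the bounded operators affiliated with $\pi(N)$ together with the family $\{\Lambda_i\}_{i \in \mathcal{I}}$, with $i_A \equiv \pi$ furnishing the unital inclusion $N \hookrightarrow M_A$. Condition $(3b)$ guarantees that $\Omega$ is cyclic and, more importantly, that every $m \in M_A$ admits a Pimsner--Popa resolution $m = \sum_{i \in \mathcal{I}} \Lambda_i^{\dagger}\, \pi(n_i)$ with coefficients $n_i \in N$; this is the concrete analog of Eq.~\eqref{Resolution by Pimsner-Popa} and is the spanning property recognized by Herman and Ocneanu. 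Because the construction is already faithfully represented on $H$, associativity, the product, and the involution come for free, in contrast to Reconstruction~\ref{recon1}, where those had to be imposed by hand. The substantive work is therefore concentrated entirely in producing $T_A$ and certifying that it lies in $P(M_A,N)$.

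I would then define $T_A(m) \equiv \pi_{\varphi_0}^{-1}\big(W^{\dagger}\, m\, W\big)$ on the appropriate domain. Two of the defining properties of an operator valued weight are then nearly immediate. Positivity and additivity under positive scalars hold because $m \mapsto W^{\dagger} m W$ is completely positive, so that $W^{\dagger} m^* m W = (mW)^{\dagger}(mW) \ge 0$ carries $M_{A,+}$ into the positive cone. The homogeneity property~\eqref{OVW Homogeneity}, $T_A(\pi(n_1)\, m\, \pi(n_2)) = n_1\, T_A(m)\, n_2$, is a direct consequence of the intertwining relation $\pi(n) W = W \pi_{\varphi_0}(n)$ and its formal adjoint $W^{\dagger} \pi(n) = \pi_{\varphi_0}(n) W^{\dagger}$, which let included elements pass through $W^{\dagger}(\cdot)W$ and emerge on the outside after applying $\pi_{\varphi_0}^{-1}$.

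The heart of the argument is to show that $T_A$ is valued in the affiliated algebra $\hat{N}$ and is faithful, semifinite, and normal. The natural left ideal $\mathfrak{n}_{T_A}$ contains all finite combinations $\Lambda_i^{\dagger}\pi(n)$, so that $\mathfrak{m}_{T_A} = \mathfrak{n}_{T_A}^*\, \mathfrak{n}_{T_A}$ is spanned by elements of the form $\pi(n_1)^* \Lambda_i \Lambda_j^{\dagger} \pi(n_2)$. On such an element homogeneity moves the outer factors of $\pi(N)$ outside $W^{\dagger}(\cdot)W$, and condition $(3a)$ — which asserts precisely that $W^{\dagger} \Lambda_i \Lambda_j^{\dagger} W \in \pi_{\varphi_0}(N)$ — guarantees that the remaining core lands in $N$. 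This is the operator-valued-weight analog of the orthogonality relation $E(\lambda_i \lambda_j^*) = q_i \delta_{ij}$, and it is matched exactly to the structure $\mathfrak{m}_{T_A} = \mathfrak{n}_{T_A}^* \mathfrak{n}_{T_A}$, so $T_A(\mathfrak{m}_{T_A}) \subset N$. Semifiniteness follows because $\Lambda_i^{\dagger}\pi(n) \in \mathfrak{n}_{T_A}$ (its image under $T_A$ of $\pi(n)^*\Lambda_i\Lambda_i^{\dagger}\pi(n)$ is $n^* q_i n$ with $q_i \in N$ bounded by $(3a)$) and these operators generate a weakly dense subalgebra of $M_A$ by $(3b)$. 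Faithfulness reduces to the statement that $T_A(m^*m) = 0$ forces $mW = 0$ and hence $m = 0$, which follows from the faithfulness of $\varphi_0$ propagated through the cyclic structure supplied by $(3b)$. Normality is inherited from the normality of the spatial map $\text{Ad}_{W^{\dagger}}$ together with that of $\pi_{\varphi_0}^{-1}$, the former being controlled by Connes' spatial derivative. Note that $T_A(\mathbb{1}) = W^{\dagger} W$ is generically unbounded, reflecting the infinite index and the failure of $W$ to be an isometry.

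The step I expect to be the main obstacle is making the spatial formula rigorous in the presence of the unboundedness of $W$. Because $i_A(n)$ generically fails to lie in the domain of the weight $\varphi = \varphi_0 \circ T_A$, the vectors $W(\eta_{\varphi_0}(n))$ of Eq.~\eqref{OVW Intertwiner} are only formal, possibly non-normalizable elements, and the identities $W^{\dagger}\pi(n) = \pi_{\varphi_0}(n) W^{\dagger}$ hold only on carefully specified cores. Controlling these domains — and in particular showing that the possibly uncountable Pimsner--Popa sum in $(3b)$ converges strongly on the relevant dense subspace, so that the extension of $T_A$ from the generating core to all of $\mathfrak{m}_{T_A}$ remains normal and $\hat{N}$-valued — is the delicate point. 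One can sidestep the worst of this by working with the rigged Hilbert space attached to $L^2(M;\varphi)$, as flagged in Section~\ref{sec: QOVW}, and by invoking the Herman--Ocneanu existence result to guarantee that the formal basis expansion is genuinely a weak resolution of the identity. With these technical points in hand, $T_A$ satisfies all the axioms and the existence of the extension $(M_A, T_A)$ follows.
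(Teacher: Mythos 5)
Your proposal follows essentially the same route as the paper: define $M_A$ as the von Neumann union $\pi(N)\vee\{\Lambda_i^\dagger\}$, set $T_A = \pi_{\varphi_0}^{-1}\circ\mathrm{Ad}_{W^\dagger}$, derive homogeneity from the intertwining relation, and use condition $(3a)$ on the Pimsner--Popa resolution $m = \sum_i \Lambda_i^\dagger\pi(n_i)$ to certify that the range lands in $\pi_{\varphi_0}(N)$. You in fact go further than the paper's own proof by also addressing faithfulness, semifiniteness, normality, and the domain issues stemming from the unboundedness of $W$, which the paper leaves implicit.
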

\begin{proof}
    We define the algebra $M_{A} \equiv N \times_{\pi} A$ to be the von Neumann union of $\pi(N)$ with the Pimsner-Popa basis, namely
    \begin{equation}
        M_{A} := \pi(N) \vee \{\Lambda_{i}^{\dagger}\},
    \end{equation}
    which is the natural generalization of the construction in the standard Q-system. A generic element $m \in M_A$ can be decomposed as
    \beq
    		m = \sum_{i \in \mathcal{I}} \Lambda_i^{\dagger} \pi(n_i),
    \eeq
    where $\{n_i\}_{i \in \mathcal{I}} \subset N$. 

    It remains to show that $A$ induces an operator-valued weight. Consider the map
    \beq
    		\mathcal{T}_A: M_A^+ \rightarrow \widehat{B(L^2(N;\varphi_0))}_+, \qquad \mathcal{T}_A(m) \equiv W^{\dagger} m W. 
    \eeq
    Clearly, $\mathcal{T}_A$ is a linear and positive map. For it to be an operator valued weight its range has to be $\widehat{\pi_{\varphi_0}(N)}$, and it must be homogeneous when acting upon elements in $\pi(N)$. The homogeneity of the map $\mathcal{T}_A$ is ensured by the intertwining property of $W$. Thus, it remains only to show that it has the appropriate range. A generic positive element in $M_A$ is given by 
    \beq
    		m^{\dagger} m = \sum_{i,j \in \mathcal{I}^{\times 2}} \pi(n_i^*) \Lambda_i \Lambda_j^{\dagger} \pi(n_j).
    \eeq
    Acting upon such an operator with $\mathcal{T}_A$ we find
    \begin{align}
        W^{\dagger} m^{\dagger} m  W = \sum_{(i,j) \in \mathcal{I}^{\times 2}} \pi_{\varphi}(n_i^*) W^{\dagger} \Lambda_i \Lambda_j^{\dagger} W \pi_{\varphi_0}(n_j).
    \end{align}
    Here, we have utilized the intertwining property of $W$. The fact that $\Lambda_i$ form a Pimsner-Popa basis implies that $W^{\dagger} \Lambda_i \Lambda_j^{\dagger} W$ is in the image of $\pi_{\varphi_0}$ and thus $\mathcal{T}_A(M_A) \subset \pi_{\varphi_0}(N)$. We can therefore promote $\mathcal{T}_A$ to an operator valued weight
    \beq
    		T_A \equiv \pi_{\varphi_0}^{-1} \circ \mathcal{T}_A,
    \eeq
    which concludes the proof.
\end{proof}

\section{Comparison of Q-systems} \label{sec: comp} 

We would now like to compare and contrast the three notions of Q-system which we have introduced in this text. There are several dimensions along which these constructions can be compared -- for our purposes we would like to provide one mathematically oriented comparison and two physically oriented comparisons. Mathematically, it is interesting to regard the different kinds of Q-systems according to their categorical structure. As we have remarked, the standard Q-system of Longo possesses a very deep categorification. On the other hand, the more abstract Q-systems are capable of describing more general scenarios at the expense of the categorical structure. Nonetheless, it seems possible that the Spatial Q-system may possess a categorical structure if we are willing to work in a more flexible category than was regarded for the standard case. Physically, it is interesting to compare the different kinds of Q-systems in terms of their information theoretic capacities to perform quantum error correction or in terms of their role as probes or quantum reference frames of a system.

\subsection{Categorification of Q-systems}

A standard Q-system for the algebra $N$ is a Frobenius algebra in the category of endomorphisms $\text{End}(N)$. Really, the appropriate category is $\text{End}_0(N)$ -- which is the category of finite index endomorphisms. By a similar token, although the generalized and spatial Q-systems lack the fine categorical structure as the standard case, they both center around relaxations of the category $\text{End}_0(N)$. In particular, the central object in the generalized Q-system is a possibly infinite index endomorphism $\theta \in \text{End}(N)$, and the central object in the spatial Q-system is a representation $\pi \in \text{Mod}(N)$, where here $\text{Mod}(N)$ is the category of representations of $N$. It is not difficult to show that, as categories, 
\beq \label{Categorical Nesting}
	\text{End}_0(N) \subset \text{End}(N) \subset \text{Mod}(N).
\eeq

The crucial difference between the category of endomorphisms and the category of representations is the structure of the arrows. In the endomorphism category, arrows between endomorphisms were taken to be elements of the algebra. Conversely, in the module category arrows between representations are operators between Hilbert spaces. As we have observed, this difference is crucial in allowing for the transition from conditional expectations to operator valued weights since in the latter case the GNS Hilbert spaces of the algebras involved are not related by an orthogonal projection. In this regard, we cannot avoid the need to formulate our analysis with respect to at least two distinct Hilbert spaces. 

Having made these observations, a natural question emerges as to whether one can understand the Spatial Q-system as a formal structure within the category $\text{Mod}(N)$. Some hint at how this may be done can be seen in the analysis of \cite{Bischoff:2014xea}, in which the same question is asked of the generalized Q-system. The authors proposed a refinement of the generalized Q-system $(\theta,\{\nu\}_{i \in \mathcal{I}}, w)$ called a generalized Q-system of intertwiners where the original structure is enhanced by requiring that each $\nu_i$ is itself an intertwiner in $\text{Hom}_{\text{End}(N)}(\theta,\theta^2)$. This mimics the quality of the Pimsner-Popa basis in the standard Q-system, and gives the generalized Q-system the complexion of a collection of Frobenius algebras inside $\text{End}(N)$ along with some fusion rules for multiplying between them. This property can then be used to define a commutator between the Pimsner-Popa basis and elements of the subalgebra $N$. This is in contrast to the Spatial Q-system in which the Pimsner-Popa basis consists of linear operators on a Hilbert space within which the algebra $N$ is also represented, thereby automatically giving rise to a product structure. It is only natural to then wonder what the categorical avatar of a Pimsner-Popa basis is in a general category. We leave the question of investigating the full categorical structure of Spatial Q-systems within $\text{Mod}(N)$ to future work. 

\subsection{Q-systems as Quantum Error Correcting Protocols}

In a very similar vein to the above discussion, the notions of Q-system which we have introduced also exhibit a nested quantum information theoretic structure. The standard Q-system is a device for encoding dual conditional expectations mapping from the extension algebra $M$ to the subalgebra $N$, and mapping from the commutant $N'$ to the commutant $M'$. In this regard, a standard Q-system exhibits the structure of exact complementary error correction as discussed, for example, in \cite{Faulkner:2020hzi}. In words, the standard Q-system provides a protocal for recovering the algebra $N$ as it is embedded within $M$ \emph{and} for the recovering the commutant algebra $M'$ as it is embedded within $N'$. When we move to the generalized Q-system, we lose one of the two conditional expectations. Thus, the generalized Q-system describes exact quantum error correction, but not of the complementary variety. We can either exactly recovery $N \subset M$ or $M' \subset N'$. Finally, the Spatial Q-system encodes operator valued weights. As we have discussed, operator valued weights (when they are not conditional expectations) are non-unital maps. At the Hilbert space level, this implies that the operator valued weight is implemented by a non-isometric map. As has been explored in recent work \cite{akers2022blackholeinteriornonisometric}, non-isometric maps are a device for encoding approximate error correction. 

If we denote by $C_0(M,N)$ the set of conditional expectations of finite index, the above discussion implies that the three Q-systems we have introduced also fit into a nested sequence:
\beq \label{QI nested sequence}
	C_0(M,N) \subset C(M,N) \subset P(M,N). 
\eeq
Indeed, the constructions we have laid out can largely be thought of as defining a bijection between Eqs.~\eqref{Categorical Nesting} and \eqref{QI nested sequence}. In this regard, the Spatial Q-system seems to present a very powerful framework for organizing our thinking about quantum error correction. Moreover, it shows us that, even in the absence of exact quantum error correction, there is still a very intimate relationship between algebras in the case of extensions and inclusions. 
\subsection{Quantum reference frames}
Yet another physical facet of the Q-system can be understood through quantum reference frames. There is a growing literature on the relation between algebraic quantum field theory and quantum reference frames~\cite{Fewster:2024pur,DeVuyst:2024pop,AliAhmad:2024wja }. One way of describing these frames is to start with a  system with a time translation symmetry and adjoining a clock that is covariant with respect to the group that conditions the system on its recorded time~\cite{pwootters}. The combined theory with the system and the clock has as its physical algebra of observables the crossed product of the system with the time translations and so is described by a Q-system. The Q-system may then be associated with a generalized quantum reference frame, in which the extension is not necessarily governed by a symemtry object like a group, for a given system. It is a way of probing aspects of the system of interest in a consistent manner with its structure. It is interesting to consider then quantum reference frames with respect to generalized symmetries that are not described by a group, which will be given by specific spatial Q-systems in the general setting as the symmetry object may not necessarily be finite. 
\section{Discussion} \label{sec: disc}

In this note we have proposed a far-reaching extension of Longo's Q-system which provides a spatial characterization of inclusions of von Neumann algebras admitting faithful, semifinite, normal operator valued weights. Given a von Neumann algebra $N$, our Spatial Q-system provides an indexing of the possible extensions $N \subset M$ without any assumption of finiteness of the inclusion or the existence of conditional expectations. In this regard, the extension of the algebra $N$ which is defined by specifying a Spatial Q-system may be interpreted as a vast generalization of the notion of the crossed product which extends the algebra $N$ by a collection of operators comprising a Pimsner-Popa basis but need not have the structure of a locally compact group. 

In this discussion we would like to exhibit several physical scenarios in which the toolset provided by the Spatial Q-system may be of great use. In future work we intend to investigate these directions. 
\begin{enumerate}
   \item \textbf{Extensions by exotic symmetry objects:}  We have paid a lot of attention in this note to extensions which arise from locally compact groups, but there are many physical systems which admit symmetries of more exotic varieties. For example, condensed matter systems have been known to exhibit categorical and non-invertible symmetries \cite{McGreevy_2023}. The spatial Q-system is a natural candidate for classifying extension by these more general symmetry objects. The Doplicher-Roberts reconstruction theorem  reconstructs a kinematical theory with a compact group from the neutral part and superselection structure data. In the language of this note, this corresponds to a finite Q-system extending the neutral algebra by charged intertwiners compatible with the group. One could use this philosophy to obtain extension of theories by something more general than a group, namely an algebra or a category. In a forthcoming note, we prove a reconstruction theorem for fusion categorical symmetries in the spirit of Doplicher-Roberts and provide the algebraic grounds for a description of these symmetries in terms of canonical methods. 
   \item \textbf{Entropic order parameters in infinite dimensions:} There is a fruitful framework for computing entropic order parameters alerting us of a symmetry breaking pattern in quantum field theory~\cite{Casini:2020rgj}. In particular, one such parameter is the entropy of a state relative to its invariant part under the action of a finite group. Using the canonical description of a theory with its extension by our \textit{spatial} Q-system, one can now describe order parameters for the symmetry breaking of infinite dimensional objects like locally compact groups and infinite non-invertible symmetries. In a forthcoming note, we describe this framework in more detail and work out explicit physical examples.
    \item \textbf{Parent theories:} The original Q-system program was fruitful in 2D CFT for classifying \textit{new} theories in which a given theory could be included while maintaining its locality and causality. As an example, the Ising model is known to have two Q-system which give rise to two parent theories~\cite{Bischoff_2015}. The spatial Q-system opens up the possibility of even more general parent theories that respect the underlying structure of a given one. These parent theories will no longer be bound by finiteness conditions and so it will be interesting to investigate them to better understand the vast landscape of quantum field theory. 
       \item \textbf{Quantum error correction:} It has already been remarked in the bulk of the note that the spatial Q-system has a quantum error correcting interpretation. Indeed, it describes the encoding of the original algebra, which we can now call the logical algebra, into a larger physical algebra. The operator-valued weight one gets from the spatial Q-system is essentially a recovery map. However, if one works strictly with such an object, then the correction \textit{cannot} be isometric or exact. The three cases outlined in the introduction take on the following interpretation: (1) exact complementary error correction, (2) exact error correction that is not complementary, and (3) approximate error correction. The spatial Q-system provides a natural language to analyze approximate, non-isometric quantum error correction, which plays a vital role, for example, in theories of the black hole interior \cite{akers2022blackholeinteriornonisometric}. There is also an interesting connection between complexity and non-locality of the operators involved in the  extension by spatial Q-system which would be relevant for these considerations.
    \item \textbf{Generalized entropy:} The crossed product has provided a way to understand generalized entropy in a Lorentzian and algebraic manner. In our language, the generalized entropy comes from a contribution from the original algebra $N$ in addition to one associated with the information contained in the Pimsner-Popa basis induced by an appropriate operator valued weight. In the case of a finite conditional expectation, this contribution is a central element which has been identified with the area in holographic settings~\cite{Harlow:2016vwg,faulkner2020holographicmapconditionalexpectation}. However, on physical grounds, the area is \textit{not} a central operator due to backreaction and non-perturbative effects in quantum gravity. In other words, $N$ and the area operator influence each other since an excitation from the matter sector can lead to a change in the region of its localization. In previous work related to this note~\cite{AliAhmad:2024saq}, we quantify this effect. It is manifest that the `area operator' is not central and there are additional terms which we interpret as backreaction. This provides an algebraic, Lorentzian, and physical framework for the emergence of the quantum extremal surface proposal of holography but also corrections to the generalized entropy in full quantum gravity. We also expect that this framework should be capable of quantifying more general contributions to the entropy like topological terms. 
    \item \textbf{Holography:} Holography may also be understood as an inclusion, in this case of a particular bulk algebra inside an appropriate subregion algebra of its CFT dual. The general goal of holography is to work out the dictionary translating between the bulk and boundary theories. However, a necessary prerequisite is determining the appropriate bulk algebra which is holographically dual to the CFT subregion. This raises tricky questions about the locality of operators in these algebras. In the bulk, one can have local inclusions of observables if there is an inclusion of subregions. However, these local inclusions (as long as they are bounded by the quantum extremal surface) cannot translate to local inclusions on the boundary since they are defined on the same subregion. A version of this was explained in Ref.~\cite{Bao:2024hwy}. This points to extensions of the boundary algebra that correspond to observables which reach deeper into the bulk. This may give us a handle for asking questions regarding the interior of black holes by seeing how far one can extend the boundary algebra. At the same time, the spatial Q-system provides a tool for indexing the space of extensions of a given algebra and so may provide some structure to the problem of systematically constructing holographically dual algebras. This suggests a route towards a purely boundary justification for entanglement wedge reconstruction. Even more optimistically, understanding the `right' reconstruction scheme for AdS/CFT may allow us to port what we learn towards a notion of holography for general spacetimes~\cite{Bousso:2022hlz,bousso_holograms_2023}.
    \item \textbf{Algebraic quantum gravity:} There is still the question of the relevance of algebras in the description of a non-perturbative theory of quantum gravity~\cite{Stewart:1974uz, Fredenhagen:1989kr, Haag:1990ht, Brunetti:2013maa, Rejzner:2016yuy, Brunetti:2016hgw, Brunetti:2022itx, AliAhmad:2025ukh}. If one replaces the bulk and boundary in the above paragraph, or the logical and physical in the quantum error correction one, with the semiclassical and quantum gravitational, we are led to positing an inclusion of semiclassical algebras inside of quantum gravitational ones. This is also resonant with the expectation from the path integral in the following sense: the path integral over matter and gravity contains within it the path integral over matter about a fixed (or perturbatively fluctuating) gravitational background. One can then further constrain the inclusion structure and the data specifying the extension into the quantum gravitational algebra, namely the spatial Q-system, based on requirements on general grounds of quantum gravity. For example, these can be swampland constraints translated to the algebraic language like the absence of global symmetries in quantum gravity. In future work, we explore the general setup described here  and apply it to the problem of black hole evaporation. 
\end{enumerate}

\section*{Acknowledgments}
The authors would like to Thomas Faulkner, Rob Leigh, Roberto Longo, and Yifan Wang for interesting and insightful discussions.


\appendix

\section{Preliminaries} \label{app: prelim} 
In this section, we provide background material underpinning our investigations.

\subsection{Spatial/Modular Theory}

Given a faithful, semifinite, normal state $\varphi \in C(M)$ we can construct the GNS representation $L^2(M;\varphi)$. Formally, $L^2(M;\varphi)$ is the completion of the vector space\footnote{If we did not assume that $\varphi$ was faithful, we would need to quotient by its kernel.} $\mathfrak{n}_{\varphi}$ with respect to the inner product
\beq
	g_{L^2(M;\varphi)}(\eta_{\varphi}(m_1), \eta_{\varphi}(m_2)) \equiv \varphi(m_1^* m_2), \qquad m_1,m_2 \in \mathfrak{n}_{\varphi}. 
\eeq
Here, the notation $\eta_{\varphi}(m)$ simply serves to distinguish $m$ as an element of the Hilbert space. Since $\mathfrak{n}_{\varphi}$ is a two sided ideal in $M$ we automatically get a representation $\pi_{\varphi}: M \rightarrow B(L^2(M;\varphi))$ and an anti-representation $\overline{\pi}_{\varphi}: M \rightarrow B(L^2(M;\varphi))$ given by
\beq
	\pi_{\varphi}(m_1) \eta_{\varphi}(m_2) \equiv \eta_{\varphi}(m_1 m_2), \qquad \overline{\pi}_{\varphi}(m_1) \eta_{\varphi}(m_2) \equiv \eta_{\varphi}(m_2 m_1). 
\eeq
We will often make use of the GNS representation. 

A fundamental tenet of the study of von Neumann algebras is modular theory \cite{takesaki2006tomita}. Here, we shall discuss Connes' spatial approach \cite{connes1980spatial,ENOCK1996466}. Let $\pi: M \rightarrow B(H)$ be a representation of $M$ and $\varphi \in P(M)$. The lineal of $\varphi$, $D(H;\varphi)$, is defined to be the set of elements $\xi \in H$ which are `bounded' with respect to $\varphi$ in the sense that
\beq	 \label{Lineal condition}
	\exists C_{\xi} \in \mathbb{R}_+ \ \{\infty\} \; | \; \norm{\pi(m) \xi}_{H} \leq C_{\xi} \varphi(m^* m), \qquad \forall m \in \mathfrak{n}_{\varphi}. 
\eeq
To each $\xi \in D(H;\varphi)$ we can associate a linear map $K^{\varphi}_{\xi}: L^2(M;\varphi) \rightarrow H$ defined by
\beq
	K^{\varphi}_{\xi}( \eta_{\varphi}(m) ) \equiv \pi(m) \xi. 
\eeq	
By Eq.~\eqref{Lineal condition}, the map $K^{\varphi}_{\xi}$ is bounded. Moreover, by construction it is an intertwiner between the representations $\pi_{\varphi}: M \rightarrow B(L^2(M;\varphi))$ and $\pi: M \rightarrow B(H)$ in the sense that
\beq \label{Connes Spatial Operator}
	K^{\varphi}_{\xi} \pi_{\varphi}(m) \eta_{\varphi}(m') = K^{\varphi}_{\xi} \eta_{\varphi}(m m') = \pi(m m') \xi = \pi(m) \pi(m') \xi = \pi(m) K^{\varphi}_{\xi} \eta_{\varphi}(m'), \qquad \forall m \in M. 
\eeq
It is worth noting that the map $K^{\varphi}_{\xi}$ also makes sense for vectors $\xi \in H$ which are \emph{not} bounded with respect to $\varphi$. In this instance, $K^{\varphi}_{\xi}$ becomes an unbounded operator with dense domain.

For future reference, given an algebra $M$ with a pair of representations $\pi_i: M \rightarrow B(H_i)$, we denote by $\text{Hom}_M(\pi_1,\pi_2)$ the set of intertwining maps $U: H_1 \rightarrow H_2$ such that
\beq
	U \pi_1(m) = \pi_2(m) U, \qquad \forall m \in M, \xi' \in H. 
\eeq	
We further denote by $\text{Hom}^0_M(\pi_1,\pi_2)$ the set of bounded intertwiners. Thus, we can write $K^{\varphi}_{\xi} \in \text{Hom}^0_M(\pi, \pi_{\varphi})$. Given $U \in \text{Hom}_M(\pi_1,\pi_2)$ and $V \in \text{Hom}_M(\pi_2,\pi_3)$ we have
\beq
	VU \pi_1(m) = V \pi_2(m) U = \pi_3(m) VU, \qquad \forall m \in M.,
\eeq
thus the product of intertwiners $VU \in \text{Hom}_M(\pi_1,\pi_3)$. If the intertwiners are both bounded, their product will be too.

Let $K^{\varphi}_{\xi}{}^{\dagger}: H \rightarrow B(L^2(M;\varphi))$ denote the formal adjoint of $K^{\varphi}_{\xi}$. By properties of the adjoint, $K^{\varphi}_{\xi}{}^{\dagger} \in \text{Hom}^0_M(\pi_{\varphi},\pi)$. Thus, for any pair of vectors $\xi_1, \xi_2 \in D(H;\varphi)$ the operator $\Theta_{\xi_1,\xi_2}^{\varphi} \equiv K^{\varphi}_{\xi_1} K^{\varphi}_{\xi_2}{}^{\dagger}: H \rightarrow H$ is bounded and belongs to the space $\text{Hom}^0_M(\pi,\pi)$. The set of bounded intertwiners from a representation to itself is equivalent to the commutant of $M$ in the given representation:
\beq
	\text{Hom}^0_M(\pi,\pi) = \{\mathcal{O} \in B(H) \; | \; \mathcal{O} \pi(m) = \pi(m) \mathcal{O}, \; \forall m \in M\} = \pi(M)'. 
\eeq	
Thus, $\Theta_{\xi_1,\xi_2}^{\varphi} \in \pi(M)'$. More suggestively, we can regard $\Theta^{\varphi}: B(H) \rightarrow \pi(M)'$ such that
\beq \label{Theta as an OVW}
	\xi_1 \otimes \xi_2^* \mapsto \Theta^{\varphi}_{\xi_1,\xi_2}.
\eeq	

Given a weight $\Psi \in P(\pi(M)')$ we can compute the expectation value $\Psi(\Theta_{\xi_1,\xi_2}^{\varphi})$. Using Eq.~\eqref{Theta as an OVW} we can interpret this as the expectation value of a projection operator in $B(H)$:
\beq		
	\Psi(\Theta_{\xi_1,\xi_2}^{\varphi}) = \Psi \circ \Theta( \xi_1 \otimes \xi_2^*).
\eeq
Since $B(H)$ is a type I algebra, we know that there exists a positive and self adjoint (although possible unbounded) operator $\frac{d\Psi}{d\varphi}: H \rightarrow H$ such that
\beq
	\Psi \circ \Theta^{\varphi}(\xi_1 \otimes \xi_2^*) = tr_H(\frac{d\Psi}{d\varphi} \xi_1 \otimes \xi_2^*) = g_H(\xi_1, \frac{d\Psi}{d\varphi} \xi_2). 
\eeq
This operator is called the \emph{spatial derivative} of $\Psi$ with respect to $\varphi$. We note that the spatial derivative as the property
\beq
	\frac{d\Psi}{d\varphi} = \frac{d\varphi}{d\Psi}^{-1}, \qquad \forall \Psi \in P(\pi(M)'), \; \varphi \in P(M).
\eeq

The spatial derivative can be used to define the modular automorphism and the Connes' cocycle in a representation that need not be standard. Given $\varphi \in P(M)$, the modular automorphism is a map $\sigma^{\varphi}: \mathbb{R} \rightarrow \text{Aut}(M)$ which is defined by
\beq \label{Modular Aut}
	\pi \circ \sigma^{\varphi}_t(m) \equiv \frac{d\varphi}{d\Psi}^{it} \pi(x) \frac{d\varphi}{d\Psi}^{-it},
\eeq
for any $\Psi \in P(\pi(M)')$. Given any pair of weights $\varphi_1,\varphi_2 \in P(M)$ the Connes' cocycle is a one parameter family of operators $u^{\varphi_1 \mid \varphi_2}_t \in M$ defined by
\beq \label{Connes' cocycle}
	\pi(u^{\varphi_1 \mid \varphi_2}_t) = \frac{d\varphi_1}{d\Psi}^{it} \frac{d\varphi_2}{d\Psi}^{-it},
\eeq
where again $\Psi$ can be chosen arbitrarily. Recall that the Connes' cocycle intertwines the modular automorphisms of the weights:
\beq
	\sigma^{\varphi_1}_t(m) = u^{\varphi_1 \mid \varphi_2}_t \sigma^{\varphi_2}_t(m) u^{\varphi_1 \mid \varphi_2}_{-t},
\eeq
which is the statement that the modular automorphisms of weights are equivalent up to conjugation by a unitary.

If we take $H = L^2(M;\varphi)$ to also be a GNS representation of $M$, then $\pi_{\varphi}(M)' \simeq M$ and any weight $\psi \in P(M)$ gives rise to a weight on the commutant $\psi' \in P(\pi_{\varphi}(M)')$ as
\beq
	\psi'(\mathcal{O}) \equiv g_{L^2(M;\varphi)}(\xi_{\psi}, \mathcal{O} \xi_{\psi}), \qquad \forall \mathcal{O} \in \pi_{\varphi}(M)',
\eeq
where here $\xi_{\psi} \in L^2(M;\varphi)$ is called the vector representative of $\psi$ with respect to $\varphi$. In this case the spatial derivative reproduces the relative modular operator introduced by Araki:
\beq
	\frac{d\psi'}{d\varphi} = \Delta_{\psi \mid \varphi}. 
\eeq
Then the Eqs.~\eqref{Modular Aut} and \eqref{Connes' cocycle} revert to their more familiar forms. On $L^2(M;\varphi)$ we can also define the modular conjugation:
\beq
	J_{\varphi}: L^2(M;\varphi) \rightarrow L^2(M;\varphi), \qquad J_{\varphi}(\eta_{\varphi}(m)) = \eta_{\varphi}(m^*), 
\eeq
which is an antilinear isometry implementing the involution at the Hilbert space level. The modular conjugation will be of use to us later on. 

The question of the existence of operator valued weights was answered by Haagerup \cite{haagerup1979operator,haagerup1979operator2} and later refined by Falcone and Takesaki \cite{falcone1999operator}. The central result is that, given an inclusion $i: N \hookrightarrow M$, there exists a faithful, semifinite, normal operator valued weight $T \in P(M,N)$ if and only if there exist faithful, semifinite, normal weights $\varphi \in P(M)$ and $\varphi_0 \in P(N)$ such that
\beq \label{Existence of OVW}
	\sigma^{\varphi}_t \circ i = i \circ \sigma^{\varphi_0}_t, \; \forall t \in \mathbb{R}. 
\eeq
In words, Eq.~\eqref{Existence of OVW} states that an operator valued weight from $M$ to $N$ exists if and only if $N$ is preserved under the modular automorphism of weights on $M$. As has been investigated by Takesaki in Ref.~\cite{TAKESAKI1972306}, this is a rather stringent requirement which leads to a large degree of factorization in the larger algebra $M$. In the event that Eq.~\eqref{Existence of OVW} holds there will always exist an operator valued weight $T \in P(M,N)$ such that $\varphi = \varphi_0 \circ T$. 

\subsection{Index Theory}

Given an inclusion $i: N \hookrightarrow M$ and a representation $\pi: M \rightarrow B(H)$, we can define a dual inclusion $i': \pi(M)' \hookrightarrow \pi(N)'$. If $P(M,N)$ is non-empty, there exists a correspondence between $P(M,N)$ and $P(\pi(N)',\pi(M)')$ first introduced by Kosaki \cite{kosaki1998type}. Given an operator valued weight $T \in P(M,N)$, there exists a unique operator valued weight $T^{-1} \in P(\pi(N)',\pi(M)')$ such that
\beq
	\frac{d \Psi \circ T^{-1}}{d \varphi} = \frac{d \Psi}{d \varphi \circ T}, \qquad \forall \Psi \in P(\pi(M)'), \; \varphi \in P(N).
\eeq
We refer to $T^{-1}$ as the Kosaki dual of $T$. 

Suppose that $E \in C(M,N)$ is a conditional expectation. If $M$ and $N$ are factor algebras it can be shown that
\beq
	E^{-1}(\mathbb{1}) = \text{Ind}(E) \mathbb{1},
\eeq
where $\text{Ind}(E) \in \mathbb{R}_+$ is a number called the index of $E$ \cite{jones1983index,pimsner1986entropy}. If $C(M,N)$ is non-empty, there will always exist a conditional expectation $E_{min}$ whose index is minimal. The index of the minimal conditional expectation is denoted by
\beq
	[M:N] \equiv \text{Ind}(E_{min}),
\eeq 
and referred to as the index of the inclusion $i: N \hookrightarrow M$. If the set $C(M,N)$ is empty, we say that $[M:N]$ is infinite. If $E \in C(M,N)$ is a conditional expectation with finite index, we can normalize its Kosaki dual to obtain a conditional expectation $\overline{E} \equiv \text{Ind}(E)^{-1} E^{-1} \in C(\pi(N)',\pi(M)')$. 

Finally, to any inclusion $i: N \hookrightarrow M$ we can associate an endomorphism $\gamma: M \rightarrow M$ called the canonical endomorphism \cite{Longo:1989tt,longo1995nets}. Let $\varphi \in P(M)$ be a weight whose restriction $\varphi_0 \circ i$ remains faithful, semifinite and normal. Then, the vector representative $\xi_{\varphi} \in L^2(M;\varphi)$ is cyclic and separating for both $\pi_{\varphi}(M)$ and $\pi_{\varphi} \circ i(N)$. Let $J_M$ denote the modular conjugation induced by $\varphi$ and $J_N$ denote the modular conjugation induced by $\varphi_0$ but acting on the Hilbert space $L^2(M;\varphi)$. The canonical endomorphism is defined:
\beq
	\gamma: M \rightarrow N, \qquad \gamma(m) \equiv \pi_{\varphi}^{-1} \circ \text{Ad}_{J_N J_M} \circ \pi_{\varphi}(m). 
\eeq
The image of the map $\gamma$ lies inside of $N$, and thus $\gamma$ may be regarded as a homomorphism between $M$ and $N$. As we shall see, this endomorphism plays a very central role in the construction of the Q-system. 
 
\subsection{Crossed Products as a Case Study} \label{sec: CP primer}

In this note, our objective is to understand how one can build extensions of an algebra $N$ by utilizing structures that only make reference to $N$. Our paradigmatic example is the crossed product of a von Neumann algebra by a locally compact group. In the main text, we show that general inclusions share many of the same features as the crossed product.  

A triple $(N,G,\alpha)$ consisting of a von Neumann algebra $N$, a locally compact group $G$ and an automorphic action $\alpha: G \rightarrow \text{Aut}(N)$ is called a von Neumann covariant system. Given any representation $\pi: N \rightarrow B(H)$, we can construct a covariant representation of $(N,G,\alpha)$ by extending $H \mapsto H_G \equiv L^2(G,H) \simeq H \otimes L^2(G)$ and defining
\beq
	\bigg(\pi_{\alpha}(n) \xi\bigg)(g) \equiv \pi \circ \alpha_{g^{-1}}(n) \bigg(\xi(g)\bigg), \qquad \bigg(\lambda(h) \xi\bigg)(g) \equiv \xi(h^{-1} g). 
\eeq
Here, $\pi_{\alpha}: N \rightarrow B(H_G)$ and $\lambda: G \rightarrow U(H_G)$ are a representation of $N$ and a unitary representation of $G$, respectively. They are compatible with the automorphism $\alpha$ in the sense that
\beq
	\pi_{\alpha} \circ \alpha_g(n) = \lambda(g) \pi_{\alpha}(n) \lambda(g^{-1}). 
\eeq
This establishes that the algebra generated by $\pi_{\alpha}(n) \lambda(g)$ is closed inside of $B(H_G)$ and can therefore be closed in the weak operator topology to induce a von Neumann algebra
\beq
	N \times_{\alpha} G \equiv \pi_{\alpha}(N) \vee \lambda(G)
\eeq
called the crossed product of $N$ by $G$. 

The algebra $N$ can be regarded as embedded in the crossed product via the map $\pi_{\alpha}: N \rightarrow M \equiv N \times_{\alpha} G$, which we view as an inclusion. It was shown by Digernes and Haagerup \cite{digernes1975duality,haagerup1978dual,haagerup1978dual2} that there will always exist an operator valued weight $T: M \rightarrow \hat{N}$. This weight can be constructed by first recognizing that a generic operator in $M$ may be written as (the limit of operators of the form)
\beq \label{CP Basis}
	\mathfrak{X} = \int_{G} d\mu(g) \; \lambda(g) \pi_{\alpha}(\mathfrak{X}(g)),
\eeq
where here $\mathfrak{X}: G \rightarrow N$. Then, Haagerup's operator valued weight is given by
\beq
	T(\mathfrak{X}^* \mathfrak{X}) = \int_G d\mu(g) \; \mathfrak{X}(g)^* \mathfrak{X}(g). 
\eeq
If $G$ is a finite group, then $T$ can be normalized to a conditional expectation. Its index is equal to the order of the group $|G|$, and thus gives rise to a dual conditional expectation, as well. If $G$ is not finite, $T$ is merely an operator valued weight and cannot be normalized. 
\section{Categorification of von Neumann Algebras} \label{app: Cat vN}

In this appendix we review a categorical point of view on von Neumann algebras largely following the presentation of \cite{Bischoff:2014xea}. Recall that a category is a set consisting of objects and arrows which are maps between objects. If, moreover, there are arrows which map between arrows, we recover the structure of a higher category. In this case, arrows between objects are called $1$-arrows, arrows between $1$-arrows are called $2$-arrows and so on and so forth. A higher category with $n$-arrows is called an $n$-category. For our purpose it will be useful to define a $2$-category associated with the set of von Neumann algebras. 

\subsection{Definition} 

Given a pair of von Neumann algebras $N$ and $M$, a homomorphism is a linear unital map $\alpha: N \rightarrow M$ preserving the product and involution
\beq
	\alpha(nn') = \alpha(n)\alpha(n'), \qquad \alpha(n^*) = \alpha(n)^*.
\eeq	
Given a pair of homomorphisms $\alpha,\beta: N \rightarrow M$ an \emph{intertwiner} of $\alpha$ and $\beta$ is an element $t \in im(\alpha) = im(\beta)$ such that
\beq
	t \alpha(n) = \beta(n) t, \; \forall n \in N. 
\eeq
For a given pair $\alpha, \beta$ we denote the set of intertwiners by $\text{Hom}(\alpha,\beta)$. If $\alpha,\beta,\gamma: N \rightarrow M$ and $t \in \text{Hom}(\alpha,\beta), s \in \text{Hom}(\beta,\gamma)$ then we can show that $st \in \text{Hom}(\alpha,\gamma)$:
\beq
	st \alpha(n) = s \beta(n) t = \gamma(n) st. 
\eeq
This defines a notion of composability for intertwiners. Using the aforementioned structures, we can now introduce the following category:

\begin{definition}[von Neumann category]
	The category of von Neumann algebras, vN, has as objects von Neumann algebras, as $1$-arrows homomorphisms between von Neumann algebras and as $2$-arrows intertwiners of homomorphisms. For a fixed von Neumann algebras the set of endomorphisms $\alpha: N \rightarrow N$ has the structure of a category denoted by $\text{End}(N)$. We can regard $\text{End}(N) \subset \text{vN}$ as a subcategory. 
\end{definition}
Notice that $\text{Hom}(\alpha,\beta) \subset im(\alpha)$ and thus inherits its norm and weak operator topology. In general $\text{Hom}(\alpha,\beta)$ is a complex vector space, but if $\alpha = \beta$ then $\text{Hom}(\alpha,\alpha)$ is in fact a $C^*$ algebra. For this reason, the category $\text{End}(N)$ is referred to as a $C^*$ category. 

\subsection{Monoidal Product}

The set vN is moreover a strict tensor category, which means that it admits a monoidal product $\otimes: \text{vN} \times \text{vN} \rightarrow \text{vN}$ which is strictly associative. The monoidal product on von Neumann algebras is simply the tensor product. The monoidal product on homomorphisms is the composition:
\beq
	\alpha: N \rightarrow Q, \beta: Q \rightarrow M, \qquad \beta \otimes \alpha \equiv \beta \circ \alpha: N \rightarrow M.
\eeq
Let $\alpha_i,\beta_i: N_i \rightarrow M_i$ with $M_2 = N_1$. Moreover, let $t_i \in \text{Hom}(\alpha_i,\beta_i) \subset im(\alpha_i) = M_i$. The monoidal product of intertwiners is given by
\beq
	t_1 \otimes t_2 \equiv t_1 \alpha_1(t_2) = \beta_1(t_2) t_1 \in \text{Hom}(\alpha_1 \otimes \alpha_2, \beta_1 \otimes \beta_2). 
\eeq	
A simple computation assures this is true:
\begin{flalign}
	t_1 \otimes t_2 \alpha_1 \otimes \alpha_2(n_2) &= t_1 \alpha_1(t_2) \alpha_1 \circ \alpha_2(n_2) \nonumber \\
	&= t_1 \alpha_1(t_2 \alpha_2(n_2)) \nonumber \\
	&= \beta_1(\beta_2(n_2) t_2) t_1 \nonumber \\
	&= \beta_1 \circ \beta_2(n_2) \beta_1(t_2) t_1 = \beta_1 \otimes \beta_2(n_2) t_1 \otimes t_2. 
\end{flalign}

Let $\alpha \in \text{End}(M)$ and denote by $\mathbb{1}_{\alpha} = \mathbb{1}_{im(\alpha)} \in im(\alpha)$. Clearly $\mathbb{1}_{\alpha}$ is a trivial intertwiner in $\text{Hom}(\alpha,\alpha)$. Given $\beta: M \rightarrow N$ and $\gamma: M \rightarrow N$ and an intertwiner $t \in \text{Hom}(\beta,\gamma)$ we have
\beq \label{otimes 1}
	t \otimes \mathbb{1}_{\alpha} = t \beta(\mathbb{1}_{\alpha}) = \gamma(\mathbb{1}_{\alpha}) t = t \in \text{Hom}(\beta \otimes \alpha, \gamma \otimes \alpha) \subset im(\beta \circ \alpha) = im(\beta). 
\eeq
Conversely, if $\beta: N \rightarrow M$ and $\gamma: N \rightarrow M$ and $t \in \text{Hom}(\beta,\gamma) \subset im(\beta)$ then
\beq \label{1 otimes}
	\mathbb{1}_{\alpha} \otimes t = \mathbb{1}_{\alpha} \alpha(t) = \alpha(t) \in \text{Hom}(\alpha \otimes \beta, \alpha \otimes \gamma). 
\eeq
\subsection{Conjugate Homomorphisms}

Given $\alpha: N \rightarrow M$ a homomorphism $\overline{\alpha}: M \rightarrow N$ is said to be conjugate if there exist a pair of intertwiners $w \in \Hom{id_N}{\overline{\alpha} \circ \alpha} \subset N$ and $\overline{w} \in \Hom{id_M}{\alpha \circ \overline{\alpha}} \subset M$ such that
\beq
	(w^* \otimes \mathbb{1}_{\overline{\alpha}})(\mathbb{1}_{\overline{\alpha}} \otimes \overline{w}) = \mathbb{1}_{\overline{\alpha}}, \qquad (\mathbb{1}_{\alpha} \otimes w^*)(\overline{w} \otimes \mathbb{1}_{\alpha}) = \mathbb{1}_{\alpha}. 
\eeq	
From \eqref{otimes 1} and \eqref{1 otimes} we can write these as
\beq
	w^* \overline{\alpha}(\overline{w}) = \mathbb{1}_N, \qquad \alpha(w)^* \overline{w} = \mathbb{1}_M. 
\eeq
We call the pair $(w,\overline{w})$ conjugate intertwiners for $(\alpha,\overline{\alpha})$. Since $(w, \overline{w})$ are intertwiners of the identity homomorphism there exist positive real (potentially infinite) scalars $d, \overline{d}$ such that
\beq
	w^*w = d \mathbb{1}_N, \qquad \overline{w}^* \overline{w} = \overline{d} \mathbb{1}_M. 
\eeq
One can always choose a normalization for $(w,\overline{w})$ such that $d = \overline{d} \equiv d(w,\overline{w})$, which we call the dimension of the conjugate pair. We define the dimension of the homomorphism $\alpha$ to be the infimum over the dimensions for all conjugate pairs:
\beq
	\text{dim}(\alpha) = \text{dim}(\overline{\alpha}) \equiv \inf_{(w,\overline{w})} d(w,\overline{w}). 
\eeq
A conjugate pair for which $d(w,\overline{w}) = \text{dim}(\alpha)$ is called standard. 

A homomorphism can be seen to be an isomorphism if and only if $\text{dim}(\alpha) = 1$. In general $\alpha(N) \subset M$ and the dimension of a homomorphism is equivalent to the square root of the index:
\beq
	\text{dim}(\alpha) = [M : \alpha(N)]^{1/2}. 
\eeq
In particular, if $N \subset M$ is a subfactor and $i: N \hookrightarrow M$ is its inclusion, then
\beq
	\text{dim}(i) = [M:N]^{1/2}. 
\eeq
We note that the map $i \circ \overline{i} \in \text{End}(M)$ is called the canonical endomorphism of $N \subset M$, while $\overline{i} \circ i \in \text{End}(N)$ is called the dual canonical endomorphism. The dimension is multiplicative over composition and additive over direct sum:
\beq
	\text{dim}(\alpha \circ \beta) = \text{dim}(\alpha) \text{dim}(\beta), \qquad \text{dim}\bigg( \bigoplus_{i \in \mathcal{I}} \alpha_i \bigg) = \sum_{i \in \mathcal{I}} \text{dim}(\alpha_i). 
\eeq

\providecommand{\href}[2]{#2}\begingroup\raggedright\endgroup

\clearpage


\begin{thebibliography}{10}

\bibitem{Doplicher:1972kr}
S.~Doplicher and J.~E. Roberts, ``{Fields, statistics and non-abelian gauge
  groups},'' \href{http://dx.doi.org/10.1007/BF01645634}{{\em Commun. Math.
  Phys.} {\bf 28} (1972)  331--348}.

\bibitem{Doplicher1989}
S.~Doplicher and J.~E. Roberts, ``A new duality theory for compact groups.,''
  {\em Inventiones mathematicae} {\bf 98} (1989) no.~1, 157--218.
  \url{http://eudml.org/doc/143725}.

\bibitem{10.1007/BF02392041}
M.~Takesaki, ``{Duality for crossed products and the structure of von Neumann
  algebras of type III},'' \href{http://dx.doi.org/10.1007/BF02392041}{{\em
  Acta Mathematica} {\bf 131} (1973) no.~none, 249 -- 310}.
  \url{https://doi.org/10.1007/BF02392041}.

\bibitem{Witten:2021unn}
E.~Witten, ``{Gravity and the crossed product},''
  \href{http://dx.doi.org/10.1007/JHEP10(2022)008}{{\em JHEP} {\bf 10} (2022)
  008}, \href{http://arxiv.org/abs/2112.12828}{{\tt arXiv:2112.12828
  [hep-th]}}.

\bibitem{Chandrasekaran:2022cip}
V.~Chandrasekaran, R.~Longo, G.~Penington, and E.~Witten, ``{An algebra of
  observables for de Sitter space},''
  \href{http://dx.doi.org/10.1007/JHEP02(2023)082}{{\em JHEP} {\bf 02} (2023)
  082}, \href{http://arxiv.org/abs/2206.10780}{{\tt arXiv:2206.10780
  [hep-th]}}.

\bibitem{Chandrasekaran:2022eqq}
V.~Chandrasekaran, G.~Penington, and E.~Witten, ``{Large N algebras and
  generalized entropy},'' \href{http://dx.doi.org/10.1007/JHEP04(2023)009}{{\em
  JHEP} {\bf 04} (2023)  009}, \href{http://arxiv.org/abs/2209.10454}{{\tt
  arXiv:2209.10454 [hep-th]}}.

\bibitem{Jensen:2023yxy}
K.~Jensen, J.~Sorce, and A.~J. Speranza, ``{Generalized entropy for general
  subregions in quantum gravity},''
  \href{http://dx.doi.org/10.1007/JHEP12(2023)020}{{\em JHEP} {\bf 12} (2023)
  020}, \href{http://arxiv.org/abs/2306.01837}{{\tt arXiv:2306.01837
  [hep-th]}}.

\bibitem{AliAhmad:2023etg}
S.~Ali~Ahmad and R.~Jefferson, ``{Crossed product algebras and generalized
  entropy for subregions},''
  \href{http://dx.doi.org/10.21468/SciPostPhysCore.7.2.020}{{\em SciPost Phys.
  Core} {\bf 7} (2024)  020}, \href{http://arxiv.org/abs/2306.07323}{{\tt
  arXiv:2306.07323 [hep-th]}}.

\bibitem{Klinger:2023tgi}
M.~S. Klinger and R.~G. Leigh, ``{Crossed products, extended phase spaces and
  the resolution of entanglement singularities},''
  \href{http://dx.doi.org/10.1016/j.nuclphysb.2024.116453}{{\em Nucl. Phys. B}
  {\bf 999} (2024)  116453}, \href{http://arxiv.org/abs/2306.09314}{{\tt
  arXiv:2306.09314 [hep-th]}}.

\bibitem{Kudler-Flam:2023hkl}
J.~Kudler-Flam, S.~Leutheusser, A.~A. Rahman, G.~Satishchandran, and A.~J.
  Speranza, ``{A covariant regulator for entanglement entropy: proofs of the
  Bekenstein bound and QNEC},'' \href{http://arxiv.org/abs/2312.07646}{{\tt
  arXiv:2312.07646 [hep-th]}}.

\bibitem{Kudler-Flam:2023qfl}
J.~Kudler-Flam, S.~Leutheusser, and G.~Satishchandran, ``{Generalized Black
  Hole Entropy is von Neumann Entropy},''
  \href{http://arxiv.org/abs/2309.15897}{{\tt arXiv:2309.15897 [hep-th]}}.

\bibitem{AliAhmad:2024eun}
S.~Ali~Ahmad, M.~S. Klinger, and S.~Lin, ``{Semifinite von Neumann algebras in
  gauge theory and gravity},'' \href{http://arxiv.org/abs/2407.01695}{{\tt
  arXiv:2407.01695 [hep-th]}}.

\bibitem{Klinger:2023auu}
M.~S. Klinger and R.~G. Leigh, ``{Crossed products, conditional expectations
  and constraint quantization},''
  \href{http://dx.doi.org/10.1016/j.nuclphysb.2024.116622}{{\em Nucl. Phys. B}
  {\bf 1006} (2024)  116622}, \href{http://arxiv.org/abs/2312.16678}{{\tt
  arXiv:2312.16678 [hep-th]}}.

\bibitem{AliAhmad:2024wja}
S.~Ali~Ahmad, W.~Chemissany, M.~S. Klinger, and R.~G. Leigh, ``{Quantum
  reference frames from top-down crossed products},''
  \href{http://dx.doi.org/10.1103/PhysRevD.110.065003}{{\em Phys. Rev. D} {\bf
  110} (2024) no.~6, 065003}, \href{http://arxiv.org/abs/2405.13884}{{\tt
  arXiv:2405.13884 [hep-th]}}.

\bibitem{AliAhmad:2024vdw}
S.~Ali~Ahmad, W.~Chemissany, M.~S. Klinger, and R.~G. Leigh, ``{Relational
  Quantum Geometry},'' \href{http://arxiv.org/abs/2410.11029}{{\tt
  arXiv:2410.11029 [hep-th]}}.

\bibitem{Fewster:2024pur}
C.~J. Fewster, D.~W. Janssen, L.~D. Loveridge, K.~Rejzner, and J.~Waldron,
  ``{Quantum reference frames, measurement schemes and the type of local
  algebras in quantum field theory},''
  \href{http://arxiv.org/abs/2403.11973}{{\tt arXiv:2403.11973 [math-ph]}}.

\bibitem{DeVuyst:2024pop}
J.~De~Vuyst, S.~Eccles, P.~A. Hoehn, and J.~Kirklin, ``{Gravitational entropy
  is observer-dependent},'' \href{http://arxiv.org/abs/2405.00114}{{\tt
  arXiv:2405.00114 [hep-th]}}.

\bibitem{enock1977produit}
M.~Enock, ``Produit crois{\'e} d'une alg{\`e}bre de von neumann par une
  alg{\`e}bre de kac,'' {\em Journal of Functional Analysis} {\bf 26} (1977)
  no.~1, 16--47.

\bibitem{enock1980produit}
M.~Enock and J.-M. Schwartz, ``Produit crois{\'e} d'une alg{\`e}bre de von
  neumann par une alg{\`e}bre de kac, ii,'' {\em Publications of the Research
  Institute for Mathematical Sciences} {\bf 16} (1980) no.~1, 189--232.

\bibitem{renault2006groupoid}
J.~Renault, {\em A groupoid approach to C*-algebras}, vol.~793.
\newblock Springer, 2006.

\bibitem{williams2007crossed}
D.~P. Williams, {\em Crossed products of C*-algebras}.
\newblock No.~134. American Mathematical Soc., 2007.

\bibitem{muhly2008renaultsequivalencetheoremgroupoid}
P.~S. Muhly and D.~P. Williams, ``Renault's equivalence theorem for groupoid
  crossed products,'' 2008.
\newblock \url{https://arxiv.org/abs/0707.3566}.

\bibitem{cmp/1104254494}
R.~Longo, ``{A duality for Hopf algebras and for subfactors. I},'' {\em
  Communications in Mathematical Physics} {\bf 159} (1994) no.~1, 133 -- 150.

\bibitem{Longo:1989tt}
R.~Longo, ``{Index of subfactors and statistics of quantum fields. I},''
  \href{http://dx.doi.org/10.1007/BF02125124}{{\em Commun. Math. Phys.} {\bf
  126} (1989)  217--247}.

\bibitem{Bischoff:2014xea}
M.~Bischoff, R.~Longo, Y.~Kawahigashi, and K.-H. Rehren, ``{Tensor categories
  of endomorphisms and inclusions of von Neumann algebras},''
  \href{http://arxiv.org/abs/1407.4793}{{\tt arXiv:1407.4793 [math.OA]}}.

\bibitem{Almheiri:2014lwa}
A.~Almheiri, X.~Dong, and D.~Harlow, ``{Bulk Locality and Quantum Error
  Correction in AdS/CFT},''
  \href{http://dx.doi.org/10.1007/JHEP04(2015)163}{{\em JHEP} {\bf 04} (2015)
  163}, \href{http://arxiv.org/abs/1411.7041}{{\tt arXiv:1411.7041 [hep-th]}}.

\bibitem{Pastawski:2015qua}
F.~Pastawski, B.~Yoshida, D.~Harlow, and J.~Preskill, ``{Holographic quantum
  error-correcting codes: Toy models for the bulk/boundary correspondence},''
  \href{http://dx.doi.org/10.1007/JHEP06(2015)149}{{\em JHEP} {\bf 06} (2015)
  149}, \href{http://arxiv.org/abs/1503.06237}{{\tt arXiv:1503.06237
  [hep-th]}}.

\bibitem{Harlow:2016vwg}
D.~Harlow, ``{The Ryu\textendash{}Takayanagi Formula from Quantum Error
  Correction},'' \href{http://dx.doi.org/10.1007/s00220-017-2904-z}{{\em
  Commun. Math. Phys.} {\bf 354} (2017) no.~3, 865--912},
  \href{http://arxiv.org/abs/1607.03901}{{\tt arXiv:1607.03901 [hep-th]}}.

\bibitem{Faulkner:2020hzi}
T.~Faulkner, ``{The holographic map as a conditional expectation},''
  \href{http://arxiv.org/abs/2008.04810}{{\tt arXiv:2008.04810 [hep-th]}}.

\bibitem{Faulkner:2020iou}
T.~Faulkner, S.~Hollands, B.~Swingle, and Y.~Wang, ``{Approximate Recovery and
  Relative Entropy I: General von Neumann Subalgebras},''
  \href{http://dx.doi.org/10.1007/s00220-021-04143-6}{{\em Commun. Math. Phys.}
  {\bf 389} (2022) no.~1, 349--397},
  \href{http://arxiv.org/abs/2006.08002}{{\tt arXiv:2006.08002 [quant-ph]}}.

\bibitem{Faulkner:2020kit}
T.~Faulkner and S.~Hollands, ``{Approximate recoverability and relative entropy
  II: 2-positive channels of general von Neumann algebras},''
  \href{http://dx.doi.org/10.1007/s11005-022-01510-9}{{\em Lett. Math. Phys.}
  {\bf 112} (2022) no.~2, 26}, \href{http://arxiv.org/abs/2010.05513}{{\tt
  arXiv:2010.05513 [quant-ph]}}.

\bibitem{Faulkner:2022ada}
T.~Faulkner and M.~Li, ``{Asymptotically isometric codes for holography},''
  \href{http://arxiv.org/abs/2211.12439}{{\tt arXiv:2211.12439 [hep-th]}}.

\bibitem{Akers:2022qdl}
C.~Akers, N.~Engelhardt, D.~Harlow, G.~Penington, and S.~Vardhan, ``{The black
  hole interior from non-isometric codes and complexity},''
  \href{http://dx.doi.org/10.1007/JHEP06(2024)155}{{\em JHEP} {\bf 06} (2024)
  155}, \href{http://arxiv.org/abs/2207.06536}{{\tt arXiv:2207.06536
  [hep-th]}}.

\bibitem{AliAhmad:2024saq}
S.~Ali~Ahmad and M.~S. Klinger, ``{Emergent Geometry from Quantum
  Probability},'' 11, 2024.

\bibitem{Del_Vecchio_2018}
S.~Del~Vecchio and L.~Giorgetti,
  \href{http://dx.doi.org/10.1142/s0129055x18500022}{``Infinite index
  extensions of local nets and defects,''{\em Reviews in Mathematical Physics}
  {\bf 30} (Feb., 2018)  1850002}.
  \url{http://dx.doi.org/10.1142/S0129055X18500022}.

\bibitem{haagerup1979operator}
U.~Haagerup, ``Operator valued weights in von neumann algebras, i,'' {\em
  Journal of Functional Analysis} {\bf 32} (1979) no.~2, 175--206.

\bibitem{haagerup1979operator2}
U.~Haagerup, ``Operator valued weights in von neumann algebras, ii,'' {\em
  Journal of Functional Analysis} {\bf 33} (1979) no.~3, 339--361.

\bibitem{pimsner1986entropy}
M.~Pimsner and S.~Popa, ``Entropy and index for subfactors,'' in {\em Annales
  scientifiques de l'Ecole normale sup{\'e}rieure}, vol.~19, pp.~57--106.
\newblock 1986.

\bibitem{kosaki1998type}
H.~Kosaki, {\em Type III factors and index theory}, vol.~43.
\newblock Research Institute of Mathematics, Global Analysis Research Center,
  Seoul~…, 1998.

\bibitem{fidaleo1995conjugate}
F.~Fidaleo and T.~Isola, ``On the conjugate endomorphism in the infinite index
  case,'' {\em Mathematica Scandinavica} (1995)  289--300.

\bibitem{fidaleo1999canonical}
F.~Fidaleo and T.~Isola, ``The canonical endomorphism for infinite index
  inclusions,'' {\em Zeitschrift f{\"u}r Analysis und ihre Anwendungen} {\bf
  18} (1999) no.~1, 47--66.

\bibitem{longo1996theorydimension}
R.~Longo and J.~E. Roberts, ``A theory of dimension,'' 1996.
\newblock \url{https://arxiv.org/abs/funct-an/9604008}.

\bibitem{herman1989index}
R.~H. Herman and A.~Ocneanu, ``Index theory and galois theory for infinite
  index inclusions of factors,'' {\em CR Acad. Sci. Paris S{\'e}r. I Math} {\bf
  309} (1989) no.~17, 923--927.

\bibitem{fidaleo_canonical_1999}
F.~Fidaleo and T.~Isola, \href{http://dx.doi.org/10.4171/zaa/869}{``The
  {Canonical} {Endomorphism} for {Infinite} {Index} {Inclusions},''{\em
  Zeitschrift für Analysis und ihre Anwendungen} {\bf 18} (Mar., 1999)
  47--66}. \url{https://ems.press/journals/zaa/articles/11034}.

\bibitem{akers2022blackholeinteriornonisometric}
C.~Akers, N.~Engelhardt, D.~Harlow, G.~Penington, and S.~Vardhan, ``The black
  hole interior from non-isometric codes and complexity,'' 2022.
\newblock \url{https://arxiv.org/abs/2207.06536}.

\bibitem{pwootters}
D.~N. Page and W.~K. Wootters,
  \href{http://dx.doi.org/10.1103/PhysRevD.27.2885}{``Evolution without
  evolution: Dynamics described by stationary observables,''{\em Phys. Rev. D}
  {\bf 27} (Jun, 1983)  2885--2892}.
  \url{https://link.aps.org/doi/10.1103/PhysRevD.27.2885}.

\bibitem{McGreevy_2023}
J.~McGreevy,
  \href{http://dx.doi.org/10.1146/annurev-conmatphys-040721-021029}{``Generalized
  symmetries in condensed matter,''{\em Annual Review of Condensed Matter
  Physics} {\bf 14} (Mar., 2023)  57–82}.
  \url{http://dx.doi.org/10.1146/annurev-conmatphys-040721-021029}.

\bibitem{Casini:2020rgj}
H.~Casini, M.~Huerta, J.~M. Magan, and D.~Pontello, ``{Entropic order
  parameters for the phases of QFT},''
  \href{http://dx.doi.org/10.1007/JHEP04(2021)277}{{\em JHEP} {\bf 04} (2021)
  277}, \href{http://arxiv.org/abs/2008.11748}{{\tt arXiv:2008.11748
  [hep-th]}}.

\bibitem{Bischoff_2015}
M.~Bischoff, Y.~Kawahigashi, R.~Longo, and K.-H. Rehren,
  \href{http://dx.doi.org/10.1007/978-3-319-14301-9}{{\em Tensor Categories and
  Endomorphisms of von Neumann Algebras: with Applications to Quantum Field
  Theory}}.
\newblock Springer International Publishing, 2015.
\newblock \url{http://dx.doi.org/10.1007/978-3-319-14301-9}.

\bibitem{faulkner2020holographicmapconditionalexpectation}
T.~Faulkner, ``The holographic map as a conditional expectation,''
  \href{http://arxiv.org/abs/2008.04810}{{\tt arXiv:2008.04810 [hep-th]}}.
  \url{https://arxiv.org/abs/2008.04810}.

\bibitem{Bao:2024hwy}
N.~Bao, Y.~Jiang, and J.~Naskar, ``{Subregion duality, wedge classification and
  no global symmetries in AdS/CFT},''
  \href{http://arxiv.org/abs/2408.04016}{{\tt arXiv:2408.04016 [hep-th]}}.

\bibitem{Bousso:2022hlz}
R.~Bousso and G.~Penington, ``{Entanglement wedges for gravitating regions},''
  \href{http://dx.doi.org/10.1103/PhysRevD.107.086002}{{\em Phys. Rev. D} {\bf
  107} (2023) no.~8, 086002}, \href{http://arxiv.org/abs/2208.04993}{{\tt
  arXiv:2208.04993 [hep-th]}}.

\bibitem{bousso_holograms_2023}
R.~Bousso and G.~Penington, ``Holograms {In} {Our} {World},'' Feb., 2023.
\newblock \url{http://arxiv.org/abs/2302.07892}. arXiv:2302.07892 [gr-qc,
  physics:hep-th, physics:quant-ph].

\bibitem{Stewart:1974uz}
J.~M. Stewart and M.~Walker, ``{Perturbations of spacetimes in general
  relativity},'' \href{http://dx.doi.org/10.1098/rspa.1974.0172}{{\em Proc.
  Roy. Soc. Lond. A} {\bf 341} (1974)  49--74}.

\bibitem{Fredenhagen:1989kr}
K.~Fredenhagen and R.~Haag, ``{On the Derivation of Hawking Radiation
  Associated With the Formation of a Black Hole},''
  \href{http://dx.doi.org/10.1007/BF02096757}{{\em Commun. Math. Phys.} {\bf
  127} (1990)  273}.

\bibitem{Haag:1990ht}
R.~Haag, ``{Thoughts on the synthesis of quantum physics and general relativity
  and the role of space-time},'' {\em Nucl. Phys. B Proc. Suppl.} {\bf 18}
  (1990)  135--140.

\bibitem{Brunetti:2013maa}
R.~Brunetti, K.~Fredenhagen, and K.~Rejzner, ``{Quantum gravity from the point
  of view of locally covariant quantum field theory},''
  \href{http://dx.doi.org/10.1007/s00220-016-2676-x}{{\em Commun. Math. Phys.}
  {\bf 345} (2016) no.~3, 741--779}, \href{http://arxiv.org/abs/1306.1058}{{\tt
  arXiv:1306.1058 [math-ph]}}.

\bibitem{Rejzner:2016yuy}
K.~Rejzner, ``{Effective quantum gravity observables and locally covariant
  QFT},'' \href{http://dx.doi.org/10.1142/S0218271816300123}{{\em Int. J. Mod.
  Phys.} {\bf 1} (2017) no.~05, 13--28},
  \href{http://arxiv.org/abs/1603.06993}{{\tt arXiv:1603.06993 [math-ph]}}.

\bibitem{Brunetti:2016hgw}
R.~Brunetti, K.~Fredenhagen, T.-P. Hack, N.~Pinamonti, and K.~Rejzner,
  ``{Cosmological perturbation theory and quantum gravity},''
  \href{http://dx.doi.org/10.1007/JHEP08(2016)032}{{\em JHEP} {\bf 08} (2016)
  032}, \href{http://arxiv.org/abs/1605.02573}{{\tt arXiv:1605.02573 [gr-qc]}}.

\bibitem{Brunetti:2022itx}
R.~Brunetti, K.~Fredenhagen, and K.~Rejzner, {\em {Locally Covariant Approach
  to Effective Quantum Gravity}}.
\newblock 2023.
\newblock \href{http://arxiv.org/abs/2212.07800}{{\tt arXiv:2212.07800
  [gr-qc]}}.

\bibitem{AliAhmad:2025ukh}
S.~Ali~Ahmad and R.~Jefferson, ``{Algebraic perturbation theory: traversable
  wormholes and generalized entropy beyond subleading order},''
  \href{http://arxiv.org/abs/2501.01487}{{\tt arXiv:2501.01487 [hep-th]}}.

\bibitem{takesaki2006tomita}
M.~Takesaki, {\em Tomita's theory of modular Hilbert algebras and its
  applications}, vol.~128.
\newblock Springer, 2006.

\bibitem{connes1980spatial}
A.~Connes, ``On the spatial theory of von neumann algebras,'' {\em Journal of
  Functional Analysis} {\bf 35} (1980) no.~2, 153--164.

\bibitem{ENOCK1996466}
M.~Enock and R.~Nest, ``Irreducible inclusions of factors, multiplicative
  unitaries, and kac algebras,''
  \href{http://dx.doi.org/https://doi.org/10.1006/jfan.1996.0053}{{\em Journal
  of Functional Analysis} {\bf 137} (1996) no.~2, 466--543}.
  \url{https://www.sciencedirect.com/science/article/pii/S0022123696900531}.

\bibitem{falcone1999operator}
T.~Falcone and M.~Takesaki, ``Operator valued weights without structure
  theory,'' {\em Transactions of the American Mathematical Society} {\bf 351}
  (1999) no.~1, 323--341.

\bibitem{TAKESAKI1972306}
M.~Takesaki, ``Conditional expectations in von neumann algebras,''
  \href{http://dx.doi.org/https://doi.org/10.1016/0022-1236(72)90004-3}{{\em
  Journal of Functional Analysis} {\bf 9} (1972) no.~3, 306--321}.
  \url{https://www.sciencedirect.com/science/article/pii/0022123672900043}.

\bibitem{jones1983index}
V.~F. Jones, ``Index for subfactors,'' {\em Inventiones mathematicae} {\bf 72}
  (1983) no.~1, 1--25.

\bibitem{longo1995nets}
R.~Longo and K.-H. Rehren, ``Nets of subfactors,'' {\em Reviews in Mathematical
  Physics} {\bf 7} (1995) no.~04, 567--597.

\bibitem{digernes1975duality}
T.~Digernes, {\em DUALITY FOR WEIGHTS ON COVARIANT SYSTEMS AND ITS
  APPLICATIONS.}
\newblock University of California, Los Angeles, 1975.

\bibitem{haagerup1978dual}
U.~Haagerup, ``On the dual weights for crossed products of von neumann algebras
  i: removing separability conditions,'' {\em Mathematica Scandinavica} {\bf
  43} (1978) no.~1, 99--118.

\bibitem{haagerup1978dual2}
U.~Haagerup, ``On the dual weights for crossed products of von neumann algebras
  ii: application of operator valued weights,'' {\em Mathematica Scandinavica}
  {\bf 43} (1978) no.~1, 119--140.

\end{thebibliography}
\end{document}